\documentclass[envcountsect,envcountsame,runningheads]{llncs}
\usepackage[utf8]{inputenc}

\usepackage{amsfonts}
\usepackage{amssymb}
\usepackage{amsmath}
\usepackage{graphicx}
\usepackage{color}
\usepackage{xspace}




\newcommand{\ignore}[1]{}


\newenvironment{proofof}[1]{\bigskip \noindent {\bf Proof of #1:}
}{\qed\par\vskip 4mm\par}

\def\@begintheorem#1#2{\par\bgroup{\sc #1\ #2. }\it\ignorespaces}
\def\@opargbegintheorem#1#2#3{\par\bgroup{\sc #1\ #2\ (#3). } \it\ignorespaces}
\def\@endtheorem{\egroup}
\newcommand{\bt}[1]{\begin{theorem}\label{#1}}
\newcommand{\bc}[1]{\begin{corollary}\label{#1}}
\newcommand{\bl}[1]{\begin{lemma}\label{#1}}
\newcommand{\be}[1]{\begin{example}\label{#1}}
\newcommand{\bp}[1]{\begin{proposition}\label{#1}}
\newcommand{\ba}[1]{\begin{algorithm}\rm\label{#1}}
\newcommand{\bd}[1]{\begin{definition}\rm\label{#1}}{\normalsize }
\newcommand{\bpr}{\begin{proof}}
\newcommand{\et}{\end{theorem}}
\newcommand{\ec}{\end{corollary}}
\newcommand{\el}{\end{lemma}}
\newcommand{\ee}{\end{example}}
\newcommand{\ep}{\end{proposition}}
\newcommand{\ed}{\end{definition}}
\newcommand{\epr}{\qed\end{proof}}



\def\R{\mathbb{R}}
\def\Z{\mathbb{Z}}
\def\N{\mathbb{N}}
\def\CC{\mathcal{C}}
\def\DD{\mathcal{D}}
\def\HH{\mathcal{H}}
\def\FF{\mathcal{F}}
\def\Oh{\mathcal{O}}

\def\la{\overline}
\def\vc#1{\mathbf{#1}}
\def\vcl{\mbox{\boldmath$\ell$}}
\def\bfx{\mathbf{x}}
\def\cc{{\overline{\vc c}}}
\def\xx{{\overline{\vc x}}}

\newcommand{\conv}{\mathop{\mathrm{conv}}}

\newcommand{\FPT}{{\sf FPT}\xspace}
\newcommand{\XP}{{\sf XP}\xspace}
\newcommand{\NP}{{\sf NP}\xspace}
\newcommand{\PP}{{\sf P}\xspace}
\newcommand{\W}[1]{{\sf W}[#1]\xspace}

\newcommand{\pedge}{\textsl{edge}}
\def\prebox#1{\mathop{\mbox{\rm #1}}}

\usepackage{ifthen}
\newcounter{Accumulate} \setcounter{Accumulate}{0}
\ifthenelse{\value{Accumulate} = 1}{
  \newwrite\accuwrite \immediate\openout\accuwrite=\jobname.acc
}{}
\usepackage{environ}
\makeatletter
\newenvironment{accumulate}{\Collect@Body\accuPrint}{}
\makeatother
\newcommand{\accuPrint}[1]{
 \ifthenelse{\value{Accumulate} = 0}{%
      #1
  }
  {
    \newtoks\prxxxm
    \prxxxm{#1}
    \immediate\write\accuwrite{\the\prxxxm}
  }
}
\newcommand{\ifaccumulating}[1]{%
  \ifthenelse{\value{Accumulate} = 1}{%
    #1%
  }{}%
}
\newcommand{\ifnoaccumulating}[1]{%
  \ifthenelse{\value{Accumulate} = 0}{%
    #1%
  }{}%
}
\newcommand{\accuprint}{%
  \ifthenelse{\value{Accumulate} = 1}{
    \immediate\closeout\accuwrite
    \input{\jobname.acc} %
  }{}
}
\ifthenelse{\value{Accumulate} = 0}{\let\APXmark\relax}
	{\def\APXmark{{\bf$\!\!$(*)~}}}


\begin{document}

\title{\bf Parameterized Shifted Combinatorial Optimization}

\author{
Jakub Gajarsk\'y%
  \thanks{Current affiliation: Technical University Berlin,
	e-mail jakub.gajarsky@tu-berlin.de
	J.~Gajarsk\'y's research was partially supported by the 
	European Research Council under the European Union's Horizon
	2020 research and innovation programme (ERC Consolidator Grant
	DISTRUCT, grant agreement No 648527).}\inst{1}
\and
Petr Hlin\v en\'y%
  \thanks{P.~Hlin\v en\'y, and partially J.~Gajarsk\'y, were
	supported by the the research centre Institute
   	for Theoretical Computer Science (CE-ITI), project P202/12/G061
	of the Czech Science Foundation.}\inst{1}
\and
Martin Kouteck\'y%
  \thanks{M.~Koutecký was partially supported by the project 17-09142S of
	the Czech Science Foundation.}\inst{2}
\and
Shmuel Onn%
  \thanks{Shmuel Onn was partially supported by the Dresner Chair at the
	Technion.}\inst{3}
}

\institute{\small Masaryk University, Brno, Czech Republic.
        Email: \small\texttt{\{gajarsky,hlineny\}@fi.muni.cz}
        \and
        Charles University, Prague.
		Email: \small\texttt{koutecky@kam.mff.cuni.cz}
		\and
		Technion - Israel Institute of Technology, Haifa, Israel.
		Email: \small\texttt{onn@ie.technion.ac.il}
        }


\maketitle

\begin{abstract}
%
{\em Shifted combinatorial optimization} is a new nonlinear optimization
framework which is a broad extension of standard combinatorial optimization, 
involving the choice of several feasible solutions at a time. 
This framework captures well studied and diverse problems
ranging from so-called vulnerability problems to sharing and partitioning
problems.
In particular, every standard combinatorial optimization problem has its
shifted counterpart, which is typically much harder. 
Already with explicitly given input set the shifted problem may be \NP-hard.
In this article we initiate a study of the parameterized complexity
of this framework. 
First we show that shifting over an explicitly given set with
its cardinality as the parameter may be in \XP, \FPT or \PP, 
depending on the objective function.
Second, we study the shifted problem over sets definable in MSO logic
(which includes, e.g., the well known MSO partitioning problems).
Our main results here are that shifted combinatorial optimization over MSO
definable sets is in \XP with respect to the MSO formula and the treewidth 
(or more generally clique-width) of the input graph, 
and is \W1-hard even under further severe restrictions.

\vskip.2cm
\noindent {\bf Keywords:}
combinatorial optimization; shifted problem; treewidth; MSO logic; MSO partitioning

\end{abstract}

\ifthenelse{\value{Accumulate} = 0}{\vfill}{}

\section{Introduction}

The following optimization problem has been studied extensively in the literature.

\vskip.2cm\noindent{\bf (Standard) Combinatorial Optimization.}
Given $S\subseteq\{0,1\}^n$ and $\vc w\in\Z^n$, solve
\vspace*{-1ex}\begin{equation}\label{standard}
\max\{\vc w\vc s\mid \vc s\in S\}\ .
\end{equation}
The complexity of the problem depends on $\vc w$ and the type and presentation of $S$.
Often, $S$ is the set of indicating (characteristic) vectors of members of a family of subsets over a ground
set $[n]:=\{1,\dots,n\}$, such as the family of $s-t$ dipaths in a digraph with
$n$ arcs, the set of perfect matchings in a bipartite or arbitrary graph with $n$ edges,
or the set of bases in a matroid over $[n]$ given by an independence oracle.

\smallskip
Partly motivated by vulnerability problems studied recently in the
literature (see a brief discussion below), 
in this article we study a broad nonlinear extension of CO, 
in which the optimization is over $r$ choices of elements of $S$ and which is
defined as follows.
For a set $S\subseteq\R^n$, let $S^r$ denote the set of $n\times r$ matrices having each column in $S$,
$$S^r\ :=\ \{\vc x\in\R^{n\times r}\mid \vc x^k\in S\,,\ k=1,\dots,r \}\ .$$
Call $\vc x,\vc y\in\R^{n\times r}$ equivalent and write $\vc x\sim\vc y$ if each row of
$\vc x$ is a permutation of the corresponding row of $\vc y$. 
The {\em shift} of $\vc x\in\R^{n\times r}$
is the unique matrix $\xx\in\R^{n\times r}$ satisfying $\xx\sim\vc x$ and
$\xx^1\geq \cdots\geq \xx^r$, that is, the unique matrix equivalent to $\vc x$ with each
row nonincreasing. 
Our nonlinear optimization problem follows:

\vskip.2cm\noindent{\bf Shifted Combinatorial Optimization (SCO).}
Given $S\subseteq\{0,1\}^n$ and $\vc c\in\Z^{n\times r}$, solve
\vspace*{-1ex}\begin{equation}\label{shift}
\max\{\vc c\xx\mid\vc x\in S^r\}\ .
\end{equation}
(Here $\vc c\xx$ is used to denote the ordinary scalar product of the
vectors $\vc c$ and~$\xx$.)

This problem easily captures many classical fundamental problems. 
%
For example, given a graph $G=(V,E)$ with $n$ vertices, let
$S:=\{N[v] \mid v \in V\} \subseteq \{0,1\}^n$, where $N[v]$ is the characteristic vector of the closed neighborhood of $v$.
Choose an integer parameter $r$ and
let $c^1_i:= 1$ for all $i$ and $c^j_i:= 0$ for all $i$ and all $j\geq 2$. 
Then the optimal objective function value of \eqref{shift} is $n$ 
if and only if we can select a set $D$ of $r$ vertices in $G$ such that every 
vertex belongs to the closed neighborhood of at least one of the selected vertices, 
that is, when $D$ is a dominating set of $G$.
Likewise, one can formulate the vertex cover and independent set problems in
a similar way.

One specific motivation for the SCO problem is as follows.
Suppose $S$ is the set of indicators of members of a family over $[n]$. 
A feasible solution $\vc x\in S^r$ then represents a choice of $r$ members
of the given family such that the $k$-th column $\vc x^k$ is the indicator 
of the $k$-th member. Call element $i$ in the
ground set {\em $k$-vulnerable} in $\xx$ if it is used by at least $k$ 
of the members represented by $\vc x$, that is, if the $i$-th row $\vc x_i$ of $\vc x$ has at least $k$ ones.
It is easy to see that the $k$-th column $\xx^k$ of the shift of $\vc x$ is precisely the indicator of the
set of $k$-vulnerable elements in $\vc x$. So the shifted optimization problem is to maximize
$$\vc c\xx\ =\ \sum\{c_i^k\mid \mbox{$i$ is $k$-vulnerable in $\vc x$},\ i=1,\dots,n\,,\ k=1,\dots,r\}\ .$$
Minimizing the numbers of $k$-vulnerable elements in $\vc x$ may be
beneficial for survival of some family members under various
attacks to vulnerable elements by an adversary, see e.g. \cite{AssadiENYZ:14,OmranSZ:2013} for more details.
For example, to minimize the number of $k$-vulnerable elements for some $k$, we set $c_i^k:=-1$
for all $i$ and $c_i^j:=0$ for all $i$ and all $j\neq k$. To {\em lexicographically} minimize
the numbers of $r$-vulnerable elements, then of $(r-1)$-vulnerable elements, and so on,
till that of $1$-vulnerable elements, we can set $c_i^k:=-(n+1)^{k-1}$ for all $i$, $k$. 

As another natural example, consider $\vc c$ with $c_i^1 := 1$ and 
$c_i^j := -1$ for $1 < j \leq r$. 
Then $\vc c \xx = n$ if and only if the columns of $\vc x$ indicate a {\em partition} of $S$. 
This formulation hence allows us to optimize over partitions of the ground
set (see Section~\ref{sec:xp_sco}).
Or, consider $\vc c$ with $\vc c_i = (1,\dots, 1, -1, \dots, -1)$
of length $a>0$ with $b\leq a$ ones, and let $S$ be the family of independent sets of a graph $G$. 
Then $\max \vc c\xx$ relates to {\em fractional coloring of~$G$};
it holds $\max \vc c\xx = bn$ if and only if $G$ has a coloring by $a$
colors in total such that every vertex receives $b$ distinct colors --
this is the so-called \textsc{$(a:b)$-coloring} problem.

\smallskip
The complexity of the shifted combinatorial optimization (SCO) problem 
depends on $\vc c$ and on the presentation of $S$, 
and is typically harder than the corresponding standard combinatorial optimization problem. 
Say, when $S$ is the set of perfect matchings in a graph,
the standard problem is polynomial time solvable, but the shifted problem is \NP-hard even
for $r=2$ and cubic graphs, as the optimal value of the above $2$-vulnerability problem is $0$
if and only if the graph is $3$-edge-colorable \cite{LevinO:2016}. 
The minimization of $2$-vulnerable arcs with $S$ the set of $s$--$t$ dipaths in a
digraph, also called the {\sc Minimum shared edges} problem, was recently shown
to be \NP-hard for $r$ variable in \cite{OmranSZ:2013}, polynomial time solvable for fixed $r$
in \cite{AssadiENYZ:14}, and fixed-parameter tractable with $r$ as a parameter in \cite{FluschnikKNS:2015}.

In the rest of this article we always assume that the number $r$ of choices is variable. 
Call a matrix $\vc c\in\Z^{n\times r}$ {\em shifted} if $\vc c=\cc$, 
that is, if its rows are nonincreasing.
In \cite{KaibelOS:2015} it was shown that when 
$S=\{\vc s\in\{0,1\}^n\mid \vc A\vc s=\vc b\}$ 
where $\vc A$ is a totally unimodular
matrix and $\vc b$ is an integer vector, the shifted problem with shifted
$\vc c$, and hence in particular the above lexicographic vulnerability problem, can be solved in polynomial time.
In particular this applies to the cases of $S$ the set of $s$--$t$ dipaths in a digraph
and $S$ the set of perfect matchings in a bipartite graph. In \cite{LevinO:2016} it was shown that the
shifted problem with shifted $\vc c$ is also solvable in polynomial time for $S$ the set of
bases of a matroid presented by an independence oracle (in particular, spanning trees in a graph),
and even for the intersection of matroids of certain type.

\vskip.2cm\noindent{\bf Main results and paper organization.}
In this article we continue on systematic study of shifted combinatorial optimization.
\ifthenelse{\value{Accumulate} = 0}{%
  The paper is organized as follows. 
  Preliminaries, including necessary basics of parameterized complexity 
  (\FPT, \XP and \W1-hardness) and of logic (MSO on graphs), are in Section~\ref{sec:preliminaries}. 
\par Then}{\par First},
in Section~\ref{sec:explicit}, we consider the case when the set $S$ is given explicitely.
While the standard problem is always trivial in such case,
the SCO problem can be \NP-hard for explicit set~$S$
(Proposition~\ref{hardness}).
Our main results on this case can be briefly summarized as follows:
\ifthenelse{\value{Accumulate} = 0}{}{\vspace*{-1ex}}%
\begin{itemize}
\item{\bf(Theorem~\ref{thm:explicit_sets}) } The shifted combinatorial optimization
problem, parameterized by $|S|=m$, is;
{(a)} for general $\vc c$ in the complexity class \XP and \W{1}-hard
w.r.t.~$m$,
{(b)} for shifted $\vc c$ in \FPT, and 
{(c)} for shifted $-\vc c$ in \PP.
\item{\bf(Theorem~\ref{shiftedt}) } The latter case {(c)} 
of shifted $-\vc c$ is in \PP even
for sets $S$ presented by a linear optimization oracle.
\end{itemize}
\ifthenelse{\value{Accumulate} = 0}{\par}{\vspace*{-0.5ex}}%
In Section~\ref{sec:xp_sco}, we study a more general framework of SCO
for the set $S$ definable in Monadic Second Order (MSO) logic.
This rich framework includes, for instance, the well-studied case of
so called MSO partitioning problems on graphs.
We prove the following statement which generalizes known results about MSO
partitioning:%
\ifthenelse{\value{Accumulate} = 0}{}{\vspace*{-1ex}}%
\begin{itemize}
\item{\bf(Theorem~\ref{thm:sco_xp}, Corollary~\ref{cor:sco_xp_cw}) } 
The shifted combinatorial optimization problem, for
(a) graphs of bounded treewidth and $S$ defined in MSO$_2$ logic, or
(b) graphs of bounded clique-width and $S$ defined in MSO$_1$ logic,
is in~\XP (parameterized by the width and the formula defining~$S$).
\end{itemize}
\ifthenelse{\value{Accumulate} = 0}{}{\vspace*{-0.5ex}}%
In the course of proving this statement we also provide a connection of shifted
optimization to separable optimization when
the corresponding polyhedron is decomposable and $0/1$
(Lemma~\ref{lem:idp_separable}).

To complement the previous tractability result, in
Section~\ref{sec:mso_part_hardness} we prove the following negative result
under much more restrictive parametrization.
\ifthenelse{\value{Accumulate} = 0}{}{\vspace*{-1ex}}%
\begin{itemize}
\item{\bf(Theorem~\ref{thm:part-hardinstance}) } 
There exists a fixed First Order formula $\phi$ such that the associated MSO$_1$
partitioning problem, and hence also the SCO problem with $S$ defined
by~$\phi$, are \W{1}-hard on graphs of bounded treedepth.
\end{itemize}

\ifthenelse{\value{Accumulate} = 1}{%
  We use standard terminology of graph theory, integer programming and
  parameterized complexity.
  Due to space restrictions, a short overview of basic terminology and
  the proofs of all our statements are moved to the Appendix.
  The statements with proofs presented in the Appendix are marked
  with~~\APXmark.
}{%
We conclude the paper by listing several interesting future 
research directions in Section~\ref{sec:conclusions}.
}

\begin{accumulate}
\ifthenelse{\value{Accumulate} = 0}
	{\section{Preliminaries}}
	{\subsection{Basic definitions}}
\label{sec:preliminaries}


We follow the standard terminology of graph theory and of linear and integer
programming.
Relevant special terminology is introduced in the corresponding sections of the paper.
Here we briefly recall basic terms of parameterized complexity
and of logic on graphs.

\smallskip
A {\em parameterized problem}~$Q$ is a subset of~$\Sigma^* \times \mathbb{N}_0$, 
where~$\Sigma$ is a finite alphabet.
A~parameterized problem~$Q$ is said to be \emph{fixed-parameter tractable}
if there is an algorithm that given~$(x,k) \in \Sigma \times \mathbb{N}_0$ decides whether~$(x,k)$
is a \textsc{yes}-instance of~$Q$ in time~$f(k) \cdot p(|x|)$ where~$f$ is some
computable function of~$k$ alone, $p$ is a polynomial and~$|x|$ is the size measure of
the input. 
The class of such problems is denoted by \FPT.
The class \XP is the class of parameterized problems that admit 
algorithms with a run-time of~$\Oh(|x|^{f(k)})$ for some computable
function~$f$, i.e.\ polynomial-time for every fixed value of~$k$.

Theory of parameterized complexity, see e.g.~\cite{DBLP:series/txcs/DowneyF13}, 
defines also complexity classes $\W t$ for $t\geq 1$,
where $\W t\subseteq \XP$ for all integers $t\geq 1$.
For instance, the $k$-independent set problem (with parameter $k$) is
complete for $\W1$.
Problems that are $\W1$-hard do not admit an FPT algorithm unless the
{\em Exponential Time Hypothesis} (ETH) fails, which is considered unlikely.

\smallskip
We now shortly introduce  \emph{monadic second order logic} (MSO) over graphs.
In {\em first-order logic} (FO) we have variables for the elements
($x,y,\ldots$), equality for variables, quantifiers $\forall,\exists$
ranging over vertices and vertex sets, and the standard Boolean connectives.
MSO is the extension of FO by quantification over sets ($X,Y,\dots$).
In this context a graph is considered as a {\em relational structure}; 
either with only the
adjacency relation on its vertex set (i.e., the relational vocabulary
consists of one predicate symbol $\prebox{edge}(x,y)\,$),
or as two-sorted structures with its vertex and edge sets and the incidence
relation between those (the vocabulary consists of $\prebox{inc}(x,e)$,
which can also be used to define adjacency).

In correspondence with the previous, graph MSO traditionally comes in two
flavours, MSO$_1$ and MSO$_2$, differing by the objects we are allowed to
quantify over:
in MSO$_1$ these are the vertices and vertex sets,
while in MSO$_2$ we can additionally quantify over edges and edge sets.
For example, the $3$-colorability property can be expressed in MSO$_1$ as follows:
\begin{eqnarray*}
 \exists X_1,X_2,X_3 &&\left[\,
 	\forall x \, (x\in X_1\vee x\in X_2\vee x\in X_3) \wedge \right.
\\ &&\bigwedge\nolimits_{i=1,2,3} \left. \!\!
   \forall x,y
 	\left(x\not\in X_i\vee y\not\in X_i\vee \neg\prebox{edge}(x,y)\right)
 \,\right]
\end{eqnarray*}
We briefly remark that MSO$_2$ can express properties which 
are not MSO$_1$ definable (e.g., Hamiltonicity).

\ifthenelse{\value{Accumulate} = 1}{\stepcounter{section}}{}
\end{accumulate}

\section{Sets Given Explicitly}\label{sec:explicit}
\begin{accumulate}
\ifthenelse{\value{Accumulate} = 1}{
\subsection{Additions to Section~\ref{sec:explicit}}
}{}\end{accumulate}

In this section we consider the shifted problem \eqref{shift} 
over an explicitly given set $S=\{\vc s^1,\dots,\vc s^m\}$.
We demonstrate that already this seemingly simple case is in fact nontrivial
and interesting.
First, notice that with $S\subseteq\{0,1\}^n$ given explicitly the problem 
is generally \NP-hard, which follows by the reduction from dominating set
which we gave in the introduction.  Moreover it follows from known lower
bounds on the dominating set problem that the brute-force algorithm which
tries all possible $r$-subsets of $S$ is likely close to optimal:

\bp{hardness}\APXmark
The SCO problem \eqref{shift} is \NP-hard for $0/1$ shifted matrices
$\vc c=\cc\in\{0,1\}^{n\times r}$ and explicitly given $0/1$ sets 
$S=\{\vc s^1,\dots,\vc s^m\}\subseteq\{0,1\}^n$. 
Moreover, unless the {\em Exponential Time Hypothesis} (ETH) fails, it cannot be solved in time $n^{o(r)}$.
\ep

\begin{accumulate}
\ifthenelse{\value{Accumulate} = 1}{%
  \proofof{Proposition~\ref{hardness}}
}{\bpr}
The \NP-complete \textit{dominating set} problem is to decide whether, given a graph $G=(V,E)$ there is a subset of vertices $D \subseteq V$ of size $r$ such that every vertex $v \in V$ is either in $D$, or has a neighbor in $D$.
Let $S:=\{N[v] \mid v \in V\} \subseteq \{0,1\}^n$, where $N[v]$ is the characteristic vector 
of the closed neighborhood of $v$, i.e.~including $v$ itself,
and let $c^1_i:= 1$ for all $i$ and $c^j_i:= 0$ for all $i$ and all $j\geq 2$. 
Then the optimal objective function value of \eqref{shift} is $n$ 
if and only if $G$ has a dominating set of size $r$.

Moreover, Chen et al.~\cite{ChenHKX:06} proved that unless ETH fails, there is no $n^{o(r)}$ algorithm solving the dominating set problem; thus, under the same assumption, there is no $m^{o(r)}$ algorithm solving SCO even when $c$ is $0/1$ and $c=\cc$.
\epr
\end{accumulate}

Note that the next results in this section concerning Shifted IP
apply to the more general situation 
in which $S$ may consist of arbitrary integer vectors, not necessarily $0/1$.
This is formulated as follows.

\vskip.2cm\noindent{\bf Shifted integer programming.}
Given $S\subseteq\Z^n$ and $\vc c\in\Z^{n\times r}$, similarly to
\eqref{shift}, solve
\vspace*{-1ex}\begin{equation}\label{shiftZ}
\max\{\vc c\xx\mid\vc x\in S^r\}
\,.\end{equation}

For $S=\{\vc s^1,\dots,\vc s^m\}$ and nonnegative integers $r_1,\dots,r_m$ 
with $\sum_{i=1}^m r_i=r$, let
$\vc x(r_1,\dots,r_m)$ be the matrix in $S^r$ with first $r_1$ columns equal to
$\vc s^1$, next $r_2$ columns equal to $\vc s^2$, and so on, 
with last $r_m$ columns equal to $\vc s^m$,
and define $f(r_1,\dots,r_m):=\vc c\xx(r_1,\dots,r_m)$.

We have got the following effective theorem in contrast with Proposition \ref{hardness}.
\bt{main}\APXmark\label{thm:explicit_sets}
The shifted integer programming problem \eqref{shiftZ} over an explicitly 
given set $S=\{\vc s^1,\dots,\vc s^m\}\subseteq\Z^n$
reduces to the following nonlinear integer programming problem over a simplex,
\begin{equation}\label{IP}
\max\left\{f(r_1,\dots,r_m)\left|~
  r_1,\dots,r_m\in\Z_+\,,\ \sum_{k=1}^m r_k=r\right\}\right..
\end{equation}
If $\vc c=\cc$ is shifted then $f$ is concave, and if $-\vc c$ is shifted then $f$ is convex.

Moreover, the following hold:
\begin{enumerate}\vspace*{-1ex}
\item
With $m$ parameter and $\vc c$ arbitrary, problem \eqref{shiftZ} is in \XP.
Furthermore, the problem is \W{1}-hard with parameter~$m$ even
for 0/1 sets~$S$.
\item
With $m$ parameter and $\vc c$ shifted, problem \eqref{shiftZ} is in \FPT.
\item
With $m$ variable and $-\vc c$ shifted, problem \eqref{shiftZ} is in \PP.
\end{enumerate}
\et

\begin{accumulate}
\ifthenelse{\value{Accumulate} = 1}{%
  \proofof{Theorem~\ref{main}}
}{\bpr}
Consider any $\vc x\in S^r$. For $k=1,\dots,m$ let
$r_k:=|\{j\mid \vc x^j=\vc s^k\}|$ be the number of columns of $\vc x$ equal to $\vc s^k$.
Then $\vc x\sim\vc x(r_1,\dots,r_m)$ so $\xx=\xx(r_1,\dots,r_m)$ and 
$\vc c\xx=f(r_1,\dots,r_m)$.
So an optimal solution $r_1,\dots,r_m$ to \eqref{IP} gives an optimal solution
$\vc x(r_1,\dots,r_m)$ to the shifted problem \eqref{shiftZ}, proving the first statement.

We next show that if $\vc c$ is shifted then $f$ is concave in the $r_k$.
Suppose first that $n=1$ so that $\vc c^1,\dots,\vc c^r$ and $\vc
s^1,\dots,\vc s^m$ are scalars.
For $k=1,\dots,m$, define functions $g_k(r_1,\dots,r_m):=\sum_{j=1}^k r_j$ which are linear
in $r_1,\dots,r_m$, and define a function $h$ by $h(0):=0$ and
$h(l):=\sum_{j=1}^l\vc c^j$
for $l=1,\dots,r$, which is concave since $\vc c^1\geq\cdots\geq\vc c^r$.

Let $\pi$ be a permutation of $[m]=\{1,\dots,m\}$ such that
$\vc s^{\pi(1)}\geq\cdots\geq\vc s^{\pi(m)}$. 
Consider any $r_1,\dots,r_m$ feasible in \eqref{IP}
and let $\vc x:=\vc x(r_1,\dots,r_m)$. Note that $\xx$ is the row vector with first $r_{\pi(1)}$
entries equal to $\vc s^{\pi(1)}$, next $r_{\pi(2)}$ entries equal to $\vc s^{\pi(2)}$,
and so on, with last $r_{\pi(m)}$ entries equal to $\vc s^{\pi(m)}$.
Let $g_k:=g_k(r_{\pi(1)},\dots,r_{\pi(m)})$ for $k=1,\dots,m$ and
$t^k:=\vc s^{\pi(k)}-\vc s^{\pi(k+1)}\geq 0$ for $k=1,\dots,m-1$. Then we have that
\begin{eqnarray}
\nonumber
  f(r_1,\dots,r_m) &=& \vc c\xx \\
\nonumber
  &=& \vc s^{\pi(1)}h(g_1)+\vc s^{\pi(2)}\left(h(g_2)-h(g_1)\right)+
  \cdots+\vc s^{\pi(m)}\left(h(g_m)-h(g_{m-1})\right)\\
\nonumber
  &=& \vc t^1h(g_1)+\vc t^2h(g_2)+\cdots+\vc t^{m-1}h(g_{m-1})+\vc s^{\pi(m)}h(g_m) \\
  &=& \sum_{k=1}^{m-1} \vc t^k h\!\left(g_k(r_{\pi(1)},\dots,r_{\pi(m)})\right)
      + \vc s^{\pi(m)}\sum_{j=1}^r \vc c^j \ .
\end{eqnarray}
Now, $g_k$ are linear functions of $r_k$, and $h$ is concave, and so each composition
$h(g_k(r_{\pi(1)},\dots,r_{\pi(m)}))$ is also concave. So $f(r_1,\dots,r_m)$, which is a constant
plus a nonnegative combination of concave functions, is a concave function of the $r_k$.

We continue with general $n$. Consider any $r_1,\dots,r_m$ which are feasible in \eqref{IP}
and let $\vc x:=\vc x(r_1,\dots,r_m)$. For each $i=1,\dots,n$ proceed as follows.
Let $f_i(r_1,\dots,r_m):=\vc c_i\xx_i$ with $\vc c_i$ the $i$-th row of $\vc c$ and $\xx_i$
the $i$-th row of the shift $\xx$. Let $\pi_i$ be a permutation of $[m]$
such that $\vc s_i^{\pi_i(1)}\geq\cdots\geq\vc s_i^{\pi_i(m)}$. Repeating the above procedure
with this $1$-dimensional data we see that $f_i(r_1,\dots,r_m)$ is concave. So
$f(r_1,\dots,r_m)$ is also concave in the $r_k$, being the following sum of concave functions,
$$f(r_1,\dots,r_m)\ =\ \vc c\xx\ =\ \sum_{i=1}^d\vc c_i\xx_i\ =\
   \sum_{i=1}^df_i(r_1,\dots,r_m)\ .$$
This also shows that if $-\vc c$ is shifted then $-f$ is concave and hence $f$ is convex.

\smallskip
We proceed with the (positive) algorithmic statements of the theorem. For part 1,
which was also proved in \cite{LevinO:2016}, just note that for fixed $m$, there are
$\Oh(r^{m-1})$ feasible solutions in \eqref{IP}, obtained by taking integers
$0\leq r_1,\dots,r_{m-1}\leq r$ with $\sum_{i=1}^{m-1} r_i\leq r$ and setting
$r_m:=r-\sum_{i=1}^{m-1} r_i$.
Hence, in polynomial time we can enumerate all, pick the best,
and obtain an optimal solution $\vc x(r_1,\dots,r_m)$ to the shifted problem \eqref{shiftZ}.

For part 2, if $\vc c$ is shifted, then, as just shown, $f$ is concave.
So the integer program \eqref{IP} is to maximize a concave function with the number $m$ of
variables as a parameter. By known results on convex integer minimization, see \cite{OertelWW:14},
this problem is fixed-parameter tractable and solvable 
in time $p(m)(\log r)^q$ for some computable function $p$ of $m$ 
and some constant~$q$. 
Because it is enough to present the objective function by an oracle, we can
extend this to the case when $\vc c$ is not given explicitly, but by a
partial sums oracle $\gamma(i,j) = \sum_{\ell=1}^j c_i^\ell$, in which case
$r$ can be given in binary.  

For part 3, if $-\vc c$ is shifted, then, as just shown, $f$ is convex. Therefore,
the maximum in \eqref{IP} is attained at a vertex of the simplex. These vertices are
the $m$ vectors $r\vc e_1,\dots,r\vc e_m$, where $\vc e_k$ is the $k$-th unit vector in $\R^m$.
Hence, in polynomial time we can pick the best vertex $r\vc e_k$
and obtain again an optimal solution $\vc x(r\vc e_k)$ to \eqref{shiftZ}.

\smallskip
To complete the proof of the theorem, we return to the hardness claim in part~1.
We proceed
by reduction from the \textsc{Multidemand Set Cover} (MSC) problem, which is
as follows.  Given is a universe $U = \{u_1, \dots, u_n\}$,
a collection of multidemands $\vc d=(d_1, \dots, d_n)$ where
$d_i\subseteq \N$ for $i=1, \dots, n$, a covering set system 
$\mathcal{F} = \{F_1, \dots, F_k\} \subseteq 2^U$, and an integer $r \in \N$.
The goal is to find an integer partition $r = p_1 + \dots + p_k$ such that, 
for all $i=1, \dots, n$, we have $(\sum_{j:u_i \in F_j} p_j) \in d_i$.  
Knop et al.~\cite{KnopKMT:16} prove that MSC is \W{1}-hard with respect to the
parameter~$n$ even when~$n=k$.

For clarity, we briefly and informally remark on a meaning of the MSC problem.
We wish to take each set $F_j$, $j\in[k]$, with multiplicity $p_j$,
and we demand that for each universe element $u_i$, $i\in[n]$,
the total sum of multiplicities of sets $u_i$ belongs to, falls into the
constraint set~$d_i$.

Given an instance $U, \vc d, \mathcal{F}$ and $r$ of MSC, let $S \subseteq
\{0,1\}^n$ be the set of characteristic vectors of $\mathcal{F}$,
where $|S|=m$ in our case.  
We will define $\vc c$ inductively.  Fix a row $i \in [n]$ and let $c_i^1 :=1$
if $1\in d_i$ and $c_i^1 :=0$ otherwise.
For $j=1, \dots, r-1$, let $c_i^{j+1} := 1-\sum_{\ell=1}^j c_i^\ell$ if $j+1 \in d_i$,
and $c_i^{j+1} := -\sum_{\ell=1}^j c_i^\ell$ otherwise.
Then $\max \vc c \vc \xx \leq n$ and $\vc c \vc \xx= n$
exactly when the multiplicities $p_1, \dots, p_m$ of the vectors of
$S$ in $\vc x\in S^r$ are such that the number of $1$'s in each row $i$ of
$\vc \xx$ falls into $d_i$, by our choice of $\vc c$.
This is the case if and only if the MSC instance is a ``yes'' instance.
\epr

\smallskip
In the rest of this section we provide several supplementary results
related to the cases of Theorem~\ref{main}.

\end{accumulate}

Let us first give an exemplary application of part 2 of Theorem~\ref{main} now. 
Bredereck et al.~\cite{BredereckFNST:2015} study the
\textsc{Weighted Set Multicover} (WSM) problem, which is as follows. 
Given a universe $U=\{u_1, \dots, u_k\}$, integer demands $d_1, \dots, d_k
\in \N$ and a multiset $\FF = \{F_1, \dots, F_n\} \subseteq 2^U$ with
weights $w_1, \dots, w_n \in \N$, find a multiset $\FF' \subseteq \FF$ of
smallest weight which satisfies all the demands -- that is, for all $i=1, \dots,
k$, $|\{F \in \FF' \mid u_i \in F\}| \geq d_i$.
It is shown~\cite{BredereckFNST:2015} that this problem is \FPT when the size of the universe is a
parameter, and then several applications in computational social choice are
given there.

Notice that $\FF$ can be represented in a succinct way by viewing $\FF$ as a
set $\FF_s = \{F_1, \dots, F_K\}$ and representing the different copies of
$F \in \FF_s$ in $\FF$ by defining $K$ weight functions $w_1, \dots, w_K$
such that, for each $i=1, \dots, K$, $w_i(j)$ returns the total weight of
the first $j$ lightest copies of $F_i$, or $\infty$ if there are less than
$j$ copies.  We call this the \textit{succinct variant}.

Bredereck et al.~\cite{BredereckFNST:2015} use Lenstra's algorithm for their result, which only works when $\FF$ is given explicitly.
We note in passing that our approach allows us 
to extend their result to the succinct case.

\begin{proposition}\label{prop:wsm_fpt}\APXmark
Weighted Set Multicover is in \FPT with respect to universe size~$k$, even in the succinct variant.
\end{proposition}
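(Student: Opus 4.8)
The plan is to reduce the (succinct) \textsc{Weighted Set Multicover} problem to an instance of shifted integer programming \eqref{shiftZ} with a \emph{shifted} cost matrix $\vc c=\cc$, and then invoke part~2 of Theorem~\ref{main} in its oracle form. Writing $\FF_s=\{F_1,\dots,F_K\}$ for the distinct sets, I note $K\le 2^k$, so the number $m$ of distinct columns of the constructed instance will be bounded by a function of the parameter $k$; an \FPT\ algorithm in $m$ is then automatically \FPT\ in $k$. I would let the ground set of the SCO instance consist of $n':=k+K+1$ coordinates: the first $k$ encode the universe $U$, the next $K$ are \emph{marker} coordinates (one per distinct set), and the last carries a null vector. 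For $j=1,\dots,K$ set $\vc s^j\in\{0,1\}^{n'}$ to have $(\vc s^j)_i=1$ iff $u_i\in F_j$ (for $i\le k$) and a single further $1$ in marker coordinate $k+j$; let $\vc s^{K+1}:=\vc0$ be the null vector, so that $m=K+1$.

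The role of the markers is to separate the weight accounting from the coverage accounting: if the multiplicities chosen for $\vc s^1,\dots,\vc s^K$ are $p_1,\dots,p_K$, then in the shift $\xx$ the $i$-th universe row carries $\mathrm{cov}_i=\sum_{j:u_i\in F_j}p_j$ ones, whereas marker row $k+j$ carries exactly $p_j$ ones. I would define $\vc c$ through its partial-sums oracle $\gamma(i,\ell)=\sum_{t=1}^\ell c_i^t$, which is all part~2 requires. On universe row $i$ I put $c_i^\ell:=M$ for $\ell\le d_i$ and $c_i^\ell:=0$ otherwise, giving $\gamma(i,\ell)=M\min(\ell,d_i)$; this row is nonincreasing. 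On marker row $k+j$ I put $c_{k+j}^\ell:=-\bigl(w_j(\ell)-w_j(\ell-1)\bigr)$, giving $\gamma(k+j,\ell)=-w_j(\ell)$. Because $w_j(\ell)$ is the total weight of the $\ell$ \emph{lightest} copies of $F_j$, its increments are nondecreasing, so $w_j$ is convex and this marker row is nonincreasing; thus $\vc c=\cc$ is shifted (copies beyond the available multiplicity give increment $-\infty$, preserving monotonicity and forbidding infeasible picks). Finally I set $r:=\sum_i d_i$ and let the null vector absorb the slack, since it covers nothing and carries no weight; as a minimal feasible sub-multiset uses at most $\sum_i d_i$ real sets, this $r$ is large enough.

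With $M$ chosen larger than the total weight of all copies in $\FF$, the objective splits as $\vc c\xx=M\sum_{i=1}^k\min(\mathrm{cov}_i,d_i)-\sum_{j=1}^K w_j(p_j)$. The first term is at most $M\sum_i d_i$, with equality exactly when every demand is met, and it dominates the weight term; hence an optimal solution meets all demands whenever possible and, subject to that, minimizes total weight. Consequently the instance is feasible iff the optimum exceeds $M\sum_i d_i-M$, and the minimum weight is then $W^*=M\sum_i d_i-\max\{\vc c\xx\}$. Since $\vc c$ is shifted, $f$ is concave by Theorem~\ref{main}, so \eqref{IP} is solved by the oracle algorithm of part~2 in time $p(m)(\log r)^q$ times the cost of one evaluation of $\gamma$; as $m=K+1\le 2^k+1$, as $\gamma$ is computable from the demands and the weight oracle in polynomial time, and as $r$ may be given in binary, the whole procedure is \FPT\ in $k$.

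I expect the main obstacle to be finding an encoding that respects the rigidity of the shifted framework, in which the only available objective is $\vc c\xx$ — which sees each row of the shift only through its number of ones — and the only available constraint is the fixed total $\sum_k r_k=r$. The two points that make it work are (i) the convexity of the succinct weight functions $w_j$, which is precisely what makes the marker columns $-w_j$ \emph{shifted}, so that part~2 rather than the \PP\ case (part~3) applies, and (ii) the combination of marker coordinates, a null padding vector, and a large multiplier $M$, which together simulate per-set weights and hard coverage constraints inside a pure maximization over a fixed number of choices. Appealing to the oracle version of part~2 is what lets the possibly large, binary-encoded weights and multiplicities of the succinct variant be handled without harming the running time.
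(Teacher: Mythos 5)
Your construction is essentially identical to the paper's: the same vectors $(\vc f_j,\vc e_j)$ plus a zero padding vector, $r=\sum_i d_i$, a big constant on the universe rows to enforce the demands, the succinct weight functions encoded on the marker rows via the partial-sums oracle (with a large finite value standing in for $\infty$), and an appeal to the oracle version of part~2 of Theorem~\ref{main}. The only difference is cosmetic — you phrase it as maximization with shifted $\vc c$ while the paper phrases it as minimization with non-decreasing $\vc c$ — and your explicit verification that convexity of the $w_j$ makes the marker rows non-increasing is a point the paper leaves implicit.
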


\begin{accumulate}
\ifthenelse{\value{Accumulate} = 1}{%
  \proofof{Proposition~\ref{prop:wsm_fpt}}
}{\bpr}
Let $\vc d, \FF, \vc w$ be an instance of \textsc{Weighted Set Multicover} with universe of size $k$, where $k$ is parameter, and let $K = |\FF_s| \leq 2^k$.
We will construct an SCO instance with $S$ of size $k+K$ such that solving it will correspond to solving the original WSM problem. Since $\max c \xx$ with $c=\cc$ is equivalent to $\min c \xx$ with $c = -\cc$, we will define a minimization instance with $c$ non-decreasing.

Let $S = \big(\{(\vc f_i, \vc e_i) \mid F_i \in \FF_s\} \cup \{(\vc 0, \vc
0)\} \big) \subseteq \{0,1\}^{k+K}$ where $f_i$ is the characteristic vector
of the set $F_i$ and $e_i$ is the $i$-th unit vector.  Let $W$ be the total
weight of $\FF$ and let $D = \sum d_i$ be the total demand.  Then, the first
$k$ rows of $\vc c$ are defined as $c_i^j = -W$ if $j \leq d_i$ and $c_i^j =
0$ otherwise. The corresponding partial sums oracle is $\gamma(i,j) =
-jW$ if $j \leq d_i$ and $\gamma(i,j) = -d_iW$ otherwise.  The remaining $K$
rows of $\vc c$ are exactly the weight functions $w_1, \dots, w_K$, that is,
$\gamma(k+i,j) = w_i(j)$, where $w_i$ returns $DW$ whenever it should return
$\infty$.

Let $r=D$ and solve the SCO given above. A solution is represented by multiplicities $r_i$ for $i=0, \dots, K$, where $r_0$ is the multiplicity of the $\vc 0$ vector.
We interpret it as a WSM solution straightforwardly: $r_i$ means how many copies of $F_i$ we take to the solution, and we always choose the $r_i$ lightest.
Observe that if the objective is at most $-(D-1)W$, it means that the
solution ``hits'' all the $-W$ items in the first $k$ rows, which in turn
means all demands are satisfied.
On the other hand, the objective is more
than $-(D-1)W$ only if some demand was not satisfied.  Also, if $\min c \xx
\leq -(D-1)W$, then the solution never hit a $DW$ item in the last $K$ rows,
which in turn means that, for each $i=1, \dots, K$, we have never used more
copies of $F_i$ than there actually are.  Because the objective decomposes
into $-DW + \sum_{i=1}^K w_i(r_i)$, where the first term is a constant and
the second is exactly the weight of the solution, we have found the optimum.
\epr
\end{accumulate}

Theorem~\ref{main}, part 3, can be applied also to sets $S$ presented implicitly by
an oracle. 
A {\em linear optimization oracle} for $S\subseteq \Z^n$ is one that, queried
on $w \in \Z^n$, solves the linear optimization problem 
$\max \{\vc w\vc s \mid \vc s\in S\}$. Namely, the oracle either
asserts that the problem is infeasible, or unbounded, or provides an optimal solution.
As mentioned before, even for $r=2$, the shifted problem for perfect matchings is
\NP-hard, and hence for general $\vc c$ the shifted problem over $S$ presented by a linear
optimization oracle is also hard even for $r=2$. 
In contrast, we have the following strengthening.

\bt{shiftedt}\APXmark
The shifted problem \eqref{shiftZ} with $\vc c$ nondecreasing, over any set
$S\subset\Z^n$ which is presented by a linear optimization oracle, 
can be solved in polynomial time.
\et
\begin{accumulate}
\ifthenelse{\value{Accumulate} = 1}{%
  \proofof{Theorem~\ref{shiftedt}}
}{\bpr}
Let $\vc w:=\sum_{j=1}^r\vc c^j$ be the sum of the columns of $\vc c$, and query the linear optimization
oracle of $S$ on $\vc w$. If the oracle asserts that the problem is
infeasible, then $S=\emptyset$ hence $S^r=\emptyset$ hence so is the shifted problem. 
Suppose it asserts that the problem is unbounded.
Then for every real number $q$ there is an $\vc s\in S$ with $\vc w
\vc s \geq q$. Then the matrix
$\vc x:=[\vc s,\dots,\vc s]$ with all columns equal to $\vc s$ satisfies $\vc \xx=\vc x$ and hence 
$\vc c\xx=\vc c \vc x=\sum_{j=1}^r\vc c^js=\vc w\vc s\geq q$, and therefore the shifted problem is also unbounded.

Suppose then that the oracle returns an optimal solution $\vc s^*\in S$ and define
$\vc x^*:=[\vc s^*,\dots,\vc s^*]$ to be the matrix with all columns equal to $\vc s^*$. We claim that $\vc x^*$
is an optimal solution to the shifted problem. Suppose indirectly $\vc x$ is a
strictly better solution.
Let $T$ be the set of columns of $\vc x$, that means 
$T:=\{\vc x^1,\dots,\vc x^r\}=\{\vc t^1,\dots,\vc t^m\}$ 
for suitable distinct $\vc t^k\in S$, where $k=1,\dots,m$ and $m=|T|\leq r$. 

Consider the shifted problem over $T$. By the proof of the algorithmic part 2 of
Theorem \ref{main}, we will have an optimal solution 
$\vc y:=\vc t(r\vc e_k)=\vc t(0,\dots,0,r,0\dots,0)$
for some unit vector $\vc e_k\in\R^m$, that is, $\vc y=[\vc t,\dots,\vc t]$ for some $\vc t\in T$.
We then obtain 
$$\vc w\vc t\ =\ \vc c\vc y\ =\ \vc c\vc {\overline
  y}\ \geq\ \vc c\xx\ >\ \vc c\xx^*\ =\ \vc c\vc x^*\ =\ \vc w\vc s^*$$
which is a contradiction to the assumed optimality of $\vc s^*$, completing the proof.
\epr
\end{accumulate}

\section{MSO-definable Sets: XP for Bounded Treewidth}
\label{sec:xp_sco}

In this section we study another tractable and rich case of shifted
combinatorial optimization, namely that of the set $S$ defined in the MSO
logic of graphs.
This case, in particular, includes well studied MSO-partitioning framework
of graphs (see below) which is tractable on graphs of bounded treewidth
and clique-width.
In the course of proving our results, it is useful to study a geometric connection of 
$0/1$ SCO problems to separable optimization over decomposable polyhedra.

\subsection{Relating SCO to decomposable polyhedra}\label{sec:geometry}
\begin{accumulate}
\ifthenelse{\value{Accumulate} = 1}{
 \section{Additions to Subsection~\ref{sec:geometry}}}{}
\end{accumulate}

The purpose of this subsection is to demonstrate how shifted optimization
over $0/1$ polytopes closely relates to an established concept of decomposable
polyhedra.
We refer to\ifthenelse{\value{Accumulate} = 0}{ the book of}{}
Ziegler~\cite{Ziegler:95} for definitions and terminology
regarding polytopes.

\begin{definition}[Decomposable polyhedron and Decomposition oracle]
  \label{def:decomposable}
  A polyhedron $P \subseteq \R^n$ is \emph{decomposable} if for
  every $k \in \N$ and every
  $\vc x \in kP \cap \Z^n$, there are $\vc x^1, \ldots,\vc x^k \in P \cap \Z^n$
  with $\vc x = \vc x^1 + \cdots +\vc x^k$, 
  where $kP = \{k\vc y \mid\vc y \in P\}$.
\ifthenelse{\value{Accumulate} = 1}{}{\par}
  A \emph{decomposition oracle} for a decomposable $P$ is one that,
  queried on $k \in \N$ given in unary
  and on $\vc x \in kP \cap \Z^n$, returns $\vc x^1, \ldots,\vc x^k \in
  P \cap \Z^n$ with $\vc x =\vc x^1 + \cdots +\vc x^k$.
  \end{definition}
This property is also called \emph{integer decomposition property} or being
\emph{integrally closed} in the literature.  
The best known example are polyhedra given by totally unimodular
matrices~\cite{BT:78}. 
Furthermore, we will use the following notion.

\begin{definition}[Integer separable (convex) minimization oracle]
  \label{def:separable}
  Let $P \subseteq \R^n$ and let $f(\vc x) = \sum_{i=1}^n f_i(x_i)$ be a
  separable function on $\R^n$. An
  \emph{integer separable minimization oracle} for $P$ is one that,
  queried on this $f$, either reports that $P \cap \Z^n$ is
  empty, or that it is unbounded, or returns a point $\vc x \in P \cap \Z^n$
  which minimizes $f(\vc x)$.
\ifthenelse{\value{Accumulate} = 1}{}{\par}
  An \emph{integer separable convex minimization oracle} for $P$ is
  an integer separable minimization oracle for $P$ which can only be
  queried on functions $f$ as above with all $f_i$ convex.
\end{definition}

We now formulate how these notions naturally connect with 
\ifthenelse{\value{Accumulate} = 1}{SCO}
{shifted optimization in the case when $S$~is~$0/1$}.

\begin{lemma}\APXmark\label{lem:idp_separable}
  Let $(S,\vc c, r)$ be an instance of shifted combinatorial
  optimization, with $S \subseteq \{0,1\}^n$, $r\in\N$ and $\vc c\in\Z^{n\times r}$.
  Let $P \subseteq [0,1]^n$ be a
  polytope such that $S = P \cap \{0,1\}^n$ and let $Q \subseteq [0,1]^{n+n'}$
  be some extension of $P$, that is, $P = \{\vc x \mid (\vc x,\vc y) \in
  Q\}$.

  Then, provided a decomposition oracle for $Q$ and an integer
  separable minimization oracle for $rQ$, the shifted problem given by
  $(S,\vc c, r)$ can be solved with one call to the optimization oracle and
  one call to the decomposition oracle. Furthermore, if $\vc c$ is shifted,
  an integer separable convex minimization oracle suffices.
\end{lemma}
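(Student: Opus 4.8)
The plan is to exploit that, for a $0/1$ matrix $\vc x\in S^r$, the shift $\xx$ is completely determined by the row sums of $\vc x$. Indeed, each row $\vc x_i\in\{0,1\}^r$ shifts to the nonincreasing vector $(1,\dots,1,0,\dots,0)$ with exactly $s_i:=\sum_{k=1}^r x_i^k$ leading ones, so that
\[
  \vc c\xx\ =\ \sum_{i=1}^n\sum_{k=1}^{s_i}c_i^k\ =\ \sum_{i=1}^n C_i(s_i),
  \qquad C_i(t):=\textstyle\sum_{j=1}^t c_i^j,\ \ C_i(0):=0 .
\]
Thus the objective is a \emph{separable} function of the integer vector $\vc s=\sum_{k=1}^r\vc x^k$ of row sums, and when $\vc c$ is shifted each $C_i$ is concave (its increments $c_i^j$ are nonincreasing), so $-\sum_i C_i$ is separable convex. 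First I would record this reformulation and the concavity observation, extending each $C_i$ to a piecewise-linear function on $\R$ so that it can legitimately be fed to a separable minimization oracle.

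Next I would characterize the set $\Sigma:=\{\sum_{k=1}^r\vc x^k\mid \vc x^k\in S\}$ of attainable row-sum vectors as exactly $\Sigma':=\pi(rQ\cap\Z^{n+n'})$, where $\pi$ denotes projection onto the first $n$ coordinates. The inclusion $\Sigma'\subseteq\Sigma$ is the easy direction: any integer point of $rQ$ is, by the decomposition oracle, a sum of $r$ integer points of $Q$, whose projections lie in $\pi(Q)\cap\{0,1\}^n=P\cap\{0,1\}^n=S$ (here $Q\subseteq[0,1]^{n+n'}$ forces integral points to be $0/1$).

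The hard part, and the step I expect to be the main obstacle, is the reverse inclusion $\Sigma\subseteq\Sigma'$, i.e.\ that the optimal row-sum vector can be realized inside $rQ\cap\Z^{n+n'}$. This amounts to showing that every $\vc s^*\in P\cap\{0,1\}^n=S$ admits an \emph{integral} lift $(\vc s^*,\vc b)\in Q\cap\Z^{n+n'}$, which is not automatic for an arbitrary extension. I would obtain it from decomposability by a scaling argument: pick a rational lift $(\vc s^*,\vc y^*)\in Q$ and an integer $k$ with $k\vc y^*$ integral; then $(k\vc s^*,k\vc y^*)\in kQ\cap\Z^{n+n'}$ decomposes into integer points $(\vc a^j,\vc b^j)\in Q$, and since each $\vc a^j\in\{0,1\}^n$ sums over $j$ to $k\vc s^*$ with $\vc s^*\in\{0,1\}^n$, every $\vc a^j$ must equal $\vc s^*$, giving the desired integral lift $(\vc s^*,\vc b^1)$. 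Summing such lifts over the $r$ columns of an attainable $\vc s$ places $\vc s$ in $\Sigma'$, so that $\Sigma=\Sigma'$.

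With this equivalence established, the algorithm and its correctness follow. I would call the integer separable minimization oracle for $rQ$ on $f(\vc z,\vc w):=-\sum_iC_i(z_i)$, with the $\vc w$-coordinates carrying zero cost; since $rQ\subseteq[0,r]^{n+n'}$ is bounded it never reports unboundedness, and it reports emptiness exactly when $S=\emptyset$. The returned optimizer $(\vc z^*,\vc w^*)$ minimizes $f$, hence maximizes $\sum_iC_i(z_i)$ over $\Sigma'=\Sigma$, which by the reformulation is the shifted optimum value. Finally a single call to the decomposition oracle for $Q$ writes $(\vc z^*,\vc w^*)=\sum_{k=1}^r(\vc x^k,\vc y^k)$ with $(\vc x^k,\vc y^k)\in Q\cap\Z^{n+n'}$, so that $\vc x=(\vc x^1,\dots,\vc x^r)\in S^r$ is a feasible solution whose row sums are $\vc z^*$ and whose value is optimal. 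When $\vc c$ is shifted, $f$ is separable convex, so the convex oracle suffices, which is exactly the final assertion.
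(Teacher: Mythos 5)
Your proof is correct and follows essentially the same route as the paper's: rewrite $\vc c\xx$ as the separable function $\sum_i C_i(s_i)$ of the row sums, minimize its negative over $rQ\cap\Z^{n+n'}$ with the oracle, and recover a feasible $\vc x\in S^r$ via one call to the decomposition oracle, with concavity of the $C_i$ giving the convex-oracle refinement for shifted $\vc c$. The one place you go beyond the paper is the inclusion $\Sigma\subseteq\Sigma'$: the paper implicitly identifies $S$ with $\pi(Q\cap\{0,1\}^{n+n'})$ and does not verify that every $\vc s^*\in S$ admits an \emph{integral} lift into $Q$, whereas your scaling argument (decompose $(k\vc s^*,k\vc y^*)$ and use that a sum of $k$ values in $\{0,1\}$ equalling $0$ or $k$ forces all summands equal) supplies exactly this missing step, so no optimal solutions are lost when passing from $S^r$ to $rQ$. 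This is a genuine, if small, tightening of the paper's argument rather than a different approach.
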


To demonstrate Lemma~\ref{lem:idp_separable} we use it to give an alternative proof of
the result of Kaibel et al.~\cite{KaibelOS:2015} that the shifted
problem is polynomial when $S = \{\vc x \mid \vc A\vc x =\vc b,
\, \vc x \in \{0,1\}^n\}$ and $\vc A$ is totally unimodular. 
It is known that $P = \{\vc x \mid \vc A\vc x= \vc b,
\mathbf{0} \leq\vc x \leq \mathbf{1}\}$ is decomposable and a decomposition
oracle is realizable in polynomial
time~\cite{Schrijver:2003}. Moreover, it is known that an integer
separable convex minimization oracle for $rP$ is realizable in
polynomial time~\cite{HochbaumS:90}. Lemma~\ref{lem:idp_separable}
implies that the shifted problem is polynomial for this $S$ when $\vc c$
is shifted.

The reason we have formulated Lemma~\ref{lem:idp_separable} for $S$ given
by an extension $Q$ of the polytope $P$ corresponding to $S$, is the
following:  while $P$ itself might not be decomposable, 
there always exists an extension of it which is decomposable.
\ifthenelse{\value{Accumulate} = 1}{See the Appendix for more details.}{}
\begin{accumulate}
\ifthenelse{\value{Accumulate} = 1}{We start with proving the promised
fact of existence of a decomposable extension.}{}
  \begin{proposition}
  Let $P \subseteq \R^n$ be a $0/1$ polytope with $m$ vertices. Then it has a decomposable
  extension $Q \subseteq \R^{n+m}$.
  \end{proposition}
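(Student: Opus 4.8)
The plan is to exhibit the standard vertex-based extended formulation and check it has the integer decomposition property. Write $\vc v^1,\dots,\vc v^m$ for the vertices of $P$; since $P$ is a $0/1$ polytope they all lie in $\{0,1\}^n$ and $P=\conv\{\vc v^1,\dots,\vc v^m\}$. (In fact, because every $0/1$ point of $[0,1]^n$ is a vertex of the cube and hence cannot be a nontrivial convex combination of points of $P\subseteq[0,1]^n$, one even has $P\cap\Z^n=P\cap\{0,1\}^n=\{\vc v^1,\dots,\vc v^m\}$.) I would then introduce one fresh coordinate $\lambda_j$ per vertex and set
\[
  Q := \bigl\{\, (\vc x,\lambda)\in\R^n\times\R^m \mid
  \vc x=\textstyle\sum_{j=1}^m \lambda_j\,\vc v^j,\ \sum_{j=1}^m \lambda_j=1,\ \lambda\ge \vc 0 \,\bigr\}.
\]
Projecting $Q$ onto the first $n$ coordinates yields exactly the set of convex combinations of the $\vc v^j$, i.e.\ $P$, so $Q\subseteq\R^{n+m}$ is an extension of $P$ in the sense of Lemma~\ref{lem:idp_separable}.

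The second step is to identify the integer points of $Q$. Since $\lambda$ is part of the ambient coordinates, any $(\vc x,\lambda)\in Q\cap\Z^{n+m}$ has $\lambda\in\Z^m_{\ge0}$ with $\sum_j\lambda_j=1$, which forces $\lambda$ to be a unit vector $\vc e_j$; correspondingly $\vc x=\vc v^j$. Hence
\[
  Q\cap\Z^{n+m}=\{(\vc v^j,\vc e_j)\mid j\in[m]\}.
\]
The $\vc x$-parts $\vc v^j$ are automatically integral, so this characterization is clean and requires no further argument.

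Finally, I would verify decomposability directly. Take any $k\in\N$ and any $(\vc x,\lambda)\in kQ\cap\Z^{n+m}$; again $\lambda\in\Z^m_{\ge0}$ but now $\sum_j\lambda_j=k$ and $\vc x=\sum_j\lambda_j\vc v^j$. Decompose by listing, for each $j$, exactly $\lambda_j$ copies of the integer point $(\vc v^j,\vc e_j)\in Q\cap\Z^{n+m}$; this produces $k$ points of $Q\cap\Z^{n+m}$ in total whose sum is $\bigl(\sum_j\lambda_j\vc v^j,\ \sum_j\lambda_j\vc e_j\bigr)=(\vc x,\lambda)$, as required. There is no genuinely hard step here: the construction is essentially forced once one uses a coordinate per vertex, and the only point deserving a moment's care is confirming that the summands genuinely lie in $Q\cap\Z^{n+m}$, which is precisely the integer-point characterization established in the second step. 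Thus $Q$ is a decomposable extension of $P$ in $\R^{n+m}$.
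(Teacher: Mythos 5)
Your proof is correct and follows essentially the same route as the paper's: both take $Q$ to be the simplex-lift with one coordinate $\lambda_j$ per vertex and decompose an integer point of $kQ$ as $\sum_j \lambda_j(\vc v^j,\vc e_j)$. Your version only adds the (useful but routine) explicit identification of $Q\cap\Z^{n+m}$ as $\{(\vc v^j,\vc e_j)\mid j\in[m]\}$, which the paper leaves implicit.
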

  \begin{proof}
  Let $V = \{\vc v_1, \ldots, \vc v_m\} \subseteq \{0,1\}^n$ be the vertices
  of $P$. Then $P$ can be obtained by projecting the $m$-dimensional
  simplex:
  \begin{align*}
    \lambda_1 + \cdots + \lambda_m &= 1 \\
    \lambda_1 \vc v_1 + \cdots + \lambda_m \vc v_m &= \vc x \\
    \lambda_i & \geq 0 \qquad  \forall i \in \{1, \ldots, m\}
  \end{align*}
  Let $Q$ be the polytope defined by the system above and fix $k \in \N$. Consider an
  integer point $(\vc x, \vc \lambda) \in kQ$. Then $\sum_{i=1}^m \lambda_i (\vc v_i, \vc e_i)$ (with $\vc
  e_i$ the $i$-th unit vector) is a decomposition of $(\vc x, \vc
  \lambda)$ into vertices of $Q$, certifying that $Q$ is decomposable.
  \qed\end{proof}
\end{accumulate}

Other potential candidates where Lemma~\ref{lem:idp_separable} could
be applied are classes of polytopes that are either decomposable, or
allow efficient integer separable (convex) minimization. Some known
decomposable polyhedra are stable set polytopes of perfect graphs, polyhedra
defined by $k$-balanced matrices~\cite{Zambelli:2007}, polyhedra defined by
nearly totally unimodular matrices~\cite{Gijswijt:2005}, etc. Some known cases
where integer separable convex minimization is polynomial are for
$P=\{\vc x \mid \vc A\vc x=\vc b, \vc x \in \{0,1, \dots, r\}^n\}$ 
where the Graver basis of $\vc A$
has small size or when $\vc A$ is highly structured, namely when $\vc A$ is
either an $n$-fold product, a transpose of it, or a 4-block $n$-fold
product; see the books of Onn~\cite{Onn:2010} and De Loera, Hemmecke and
Köppe~\cite{DeLoeraHK:2013}.

\begin{accumulate}
Now we return to the proof of Lemma~\ref{lem:idp_separable}.
  
\begin{proofof}{Lemma~\ref{lem:idp_separable}}
  Fix $(\vc x,\vc y) \in rQ \cap \Z^{n+n'}$ and consider the set
\begin{align*}
  Q_{(\vc x,\vc y)}^r&:= \left\{ (\vc a,\vc b) = ((\vc a^i, \vc b^i))_{i=1}^r \left|~
  \sum_{i=1}^r (\vc a^i, \vc b^i) = (\vc x,\vc y), (\vc a^i, \vc b^i) \in Q \cap \{0,1\}^{n+n'},
  1 \leq i \leq r \right\}\right. \\
  &\, \subseteq \left(Q \cap \{0,1\}^{n+n'}\right)^r
\end{align*}

Note that $\vc c \vc{\la a}$ is the same for all
$(\vc a,\vc b) \in Q_{(\vc x,\vc y)}^r$, and observe that $(Q \cap \{0,1\}^{n+n'})^r =
\bigcup_{(\vc x,\vc y) \in (rQ \cap \{0,1\}^{n+n'})} Q_{(\vc x,\vc y)}^r$.

Consider now the objective function $\vc c \in \Z^{n \times r}$. 
Define $w_i: \{0, \dots, r\} \rightarrow \Z$ for $i=1, \dots,
n$ by $w_i(k) := \sum_{j=1}^k c_i^j$. Note that if $\vc c$ is shifted,
every $w_i$ is concave as it is a partial sum of a non-increasing
sequence. Observe that $\vc c \vc{\la a} =
\sum_{i=1}^n w_i(\sum_{j=1}^r a_i^j) = \sum_{i=1}^n w_i(x_i)$. It
follows that the minimum of $\sum_{i=1}^n w_i(x_i)$ over $rQ \cap \Z^{n+n'}$ equals
the minimum of $\vc c \vc{\la a}$ over $(Q \cap \{0,1\}^{n+n'})^r$. 

To solve the shifted problem, let $f=\sum_{i=1}^{n+n'} f_i$ with $f_i
:= w_i$ for $1 \leq i \leq n$ and $f_i := 0$ for $n+1 \leq i \leq
n+n'$, and query the integer separable minimization oracle on $rQ$ with
$-f$ (minimizing $-f$ maximizes $f$). The oracle returns that the
problem is either infeasible or unbounded or returns $(\vc x,\vc y)
\in rQ$ maximizing $f$. Next, query the decomposition oracle for $Q$
on $r$ and $(\vc x,\vc y)$ to obtain $((\vc x^i,\vc y^i))_{i=1}^{r}$,
and return this as the solution. If $\vc c$ is shifted then $-f$ is
convex and an integer separable convex minimization oracle suffices
for the first step.
\end{proofof}

\end{accumulate}

\subsection{XP algorithm for MSO-definable set}
\label{sec:XPMSO}
\begin{accumulate}
 \ifthenelse{\value{Accumulate} = 1}{
\section{Additions to Subsection~\ref{sec:XPMSO}}
}{}\end{accumulate}

We start with defining the necessary specialized terms.
\ifthenelse{\value{Accumulate} = 1}{%
We assume that the reader is familiar with the standard 
{\em treewidth} of a graph (see the Appendix).
}{}
\begin{accumulate}
\begin{definition}[Treewidth]
Given a graph~$G$, a \emph{tree-decomposition of $G$} is an ordered pair
$(T,\mathcal{W})$, where~$T$ is a tree and 
$\mathcal{W} = \{W_x \subseteq V(G) \mid x \in V(T)\}$ is a collection
of {\em bags} (vertex sets of~$G$), 
such that the following hold:
\begin{enumerate}
    \item $\bigcup_{x \in V(T)} W_x = V(G)$;
    \item for every edge~$e = uv$ in~$G$, there exists~$x \in V(T)$ such that~$u,v \in W_x$;
    \item for each~$u \in V(G)$, the set~$\{x \in V(T) \mid u \in W_x\}$
    induces a subtree of $T$.
\end{enumerate}
The {width} of a tree-decomposition $(T,\mathcal{W})$ is 
$(\max_{x \in V(T)}|W_x|)-1$.
The \emph{treewidth} of~$G$, denoted~$tw(G)$, is the smallest width of a
tree-decomposition of~$G$. 
\end{definition}
\end{accumulate}

Given a matrix $\vc A \in \Z^{n \times m}$, we define the corresponding
\emph{Gaifman graph} $G = G(\vc A)$ as follows. Let $V(G) = [m]$.
We let $\{i,j\} \in E(G)$ if and only if there is an
$r \in [n]$ with $\vc A[r, i] \neq 0$ and $\vc A[r, j] \neq 0$.  
Intuitively, two vertices of $G$ are adjacent if the corresponding 
variables $x_i,x_j$ occur together in some constraint
of $\vc A\vc x \leq \vc b$.
The \emph{(Gaifman) treewidth of a matrix $\vc A$} is then the
treewidth of its Gaifman graph, i.e., $tw(\vc A) := tw(G(\vc A))$.

The aforementioned MSO-partitioning framework
of graphs comes as follows.

\vskip.2cm\noindent{\bf MSO-partitioning problem.}
  Given a graph $G$, an MSO$_2$ formula $\varphi$ with one free vertex-set
  variable and an integer $r$, the task is as follows;
\begin{itemize}
\item to find a
  partition $U_1 \dot\cup\, U_2 \dots \dot\cup\, U_r = V(G)$ of the vertices of $G$
  such that $G \models \varphi(U_i)$ for all $i=1,\dots,r$, or
\item to confirm that no such partition of $V(G)$ exists.
\end{itemize}

For example, if $\varphi(X)$ expresses that $X$ is an independent set,
then the $\varphi$-\textsc{MSO-partitioning} problem decides if
$G$ has an $r$-coloring, and thus, finding minimum feasible $r$ (simply 
by trying $r=1,2,\dots$) solves the \textsc{Chromatic number} problem. 
Similarly, if $G \models \varphi(X)$ when $X$
is a dominating set, minimizing $r$ solves the 
\textsc{Domatic number} problem, and so on.

Rao~\cite{Rao:2007} showed an algorithm for \textsc{MSO-partitioning},
for any MSO$_2$ formula $\varphi$,
on a graph $G$ with treewidth $tw(G) = \tau$ running in time
 $r^{f(\varphi, \tau)}n$ (\XP) for some computable function $f$. 
Our next result widely generalizes this to SCO over MSO-definable sets.

\begin{definition}[MSO-definable sets]
For a graph $G$ on $|V(G)| = n$ vertices,
we interpret a $0/1$ vector $\vc{x} \in \{0,1\}^n$ as the set $X \subseteq V$
where $v \in X$ iff $x_v = 1$.
We then say that \emph{$\vc{x}$ satisfies a formula $\varphi$} if $G \models
 \varphi(X)$. Let 
$$S_{\varphi}(G) = \{\vc{x} \mid \vc{x} \textrm{ satisfies } \varphi
	\textrm{ in $G$}\} \,.$$
\end{definition}

Let $\vc c$ be defined as $c_i^1 := 1$ for $1 \leq i \leq
n$ and $c_i^j := -1$ for $2 \leq j \leq r$ and $1 \leq i \leq n$. 
Observe then the following: deciding whether the shifted problem with $S =
S_{\varphi}(G)$, $\vc c$ and $r$, has an optimum of value $n$ is
equivalent to solving the \textsc{MSO-partitioning} problem for $\varphi$.

\begin{theorem}\APXmark\label{thm:sco_xp}
Let $G$ be a graph of treewidth $tw(G) = \tau$,
let $\varphi$ be an MSO$_2$ formula and
$S_\varphi(G) = \{\mathbf{x} \mid \mathbf{x} \textrm{ satisfies }
\varphi \}$. There is an algorithm solving the shifted problem with $S
= S_\varphi(G)$ and any given $\vc c$ and $r$ in time $r^{f(\varphi,
  \tau)}\cdot|V(G)|$ for some computable function $f$. In other words, for
parameters $\varphi$ and $\tau$, the problem is in the complexity class \XP.
\end{theorem}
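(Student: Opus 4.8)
The plan is to combine three ingredients: the standard MSO-automata machinery on tree decompositions, the geometric reduction of Lemma~\ref{lem:idp_separable}, and the fact that the ``profile'' polytope arising from MSO counting is decomposable and admits efficient separable minimization. The overall strategy is to reduce the shifted problem over $S_\varphi(G)$ to a separable optimization problem over a suitable extended polyhedron, and then show that both the decomposition oracle and the separable minimization oracle for that polyhedron can be realized in the claimed $\XP$ running time.

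First I would fix a width-$\tau$ tree decomposition of $G$ (computable in \FPT\ time) and recall that for the MSO$_2$ formula $\varphi$ there is a bottom-up dynamic program over the decomposition whose states are the finitely many \emph{types} (equivalence classes under MSO-equivalence up to the quantifier rank of $\varphi$) of a partial assignment restricted to a bag. The number of such types is bounded by a computable function $g(\varphi,\tau)$. The key observation is that a vector $\vc x\in\{0,1\}^n$ lies in $S_\varphi(G)$ exactly when the induced assignment reaches an accepting type at the root; moreover, the objective $\vc c\xx$ depends only on the row-sums of $\vc x$, i.e.\ on how many of the $r$ columns put each vertex $i$ into the selected set. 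So I would set up an integer program whose variables count, for each node of the decomposition and each type, how many of the $r$ columns route through that type, subject to the usual ``flow'' consistency constraints that glue children types to parent types. This is exactly an integer program of the ``$n$-fold''/tree-structured shape whose constraint matrix has bounded Gaifman treewidth governed by $g(\varphi,\tau)$, and whose feasible integer points are in bijection with elements of $(S_\varphi(G))^r$ up to row-permutation.

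Next I would invoke the geometric framework of Subsection~\ref{sec:geometry}. Let $P\subseteq[0,1]^n$ be the polytope whose $0/1$ points are exactly $S_\varphi(G)$, and let $Q$ be the extended (lifted) polytope described by the dynamic-program constraints above, so that $P$ is the projection of $Q$ onto the $\vc x$-coordinates. The lifted description makes $Q$ an integrally-decomposable polyhedron: an integer point of $rQ$ records, per type per node, counts summing to $r$, and these can be split into $r$ individual type-consistent assignments greedily along the tree, realizing the decomposition oracle of Definition~\ref{def:decomposable} in time $r^{f(\varphi,\tau)}\cdot|V(G)|$. Writing $w_i(k):=\sum_{j=1}^k c_i^j$ as in the proof of Lemma~\ref{lem:idp_separable}, solving the shifted problem reduces to minimizing the separable function $-\sum_{i=1}^n w_i(x_i)$ over $rQ\cap\Z$. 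Because $Q$ has bounded treewidth, this separable integer minimization can be carried out by the same tree dynamic program, now carrying for each partial type the best achievable partial objective value; the table at each node has size polynomial in $r$ with exponent depending only on $g(\varphi,\tau)$, giving the $r^{f(\varphi,\tau)}\cdot|V(G)|$ bound. Lemma~\ref{lem:idp_separable} then assembles an optimal $\vc x\in (S_\varphi(G))^r$ from one separable-minimization call and one decomposition call.

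The main obstacle I expect is making the reduction to a \emph{separable} objective honest: the shifted value $\vc c\xx$ a priori depends on the sorted rows, not on plain counts, and the crucial identity $\vc c\xx=\sum_i w_i(\sum_j x_i^j)$ from Lemma~\ref{lem:idp_separable} is exactly what lets the per-vertex contribution depend only on the row-sum, hence be expressible additively along the decomposition. Verifying that this separated objective integrates cleanly into the MSO dynamic program—so that the state at each bag need only record the accepting type together with partial row-sum information, without blowing up the state space beyond $r^{f(\varphi,\tau)}$—is the technical heart, and it is also where I would be careful that the construction works for \emph{arbitrary} $\vc c$ (not only shifted $\vc c$), which forces the use of the general integer separable minimization oracle rather than the convex one. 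Everything else (existence of the decomposition, correctness of the gluing, the final assembly) follows from the machinery already set up in Lemma~\ref{lem:idp_separable} and the finiteness of MSO types.
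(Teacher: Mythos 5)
Your proposal follows essentially the same route as the paper: reduce the shifted problem to separable integer minimization over a decomposable extended polytope $rQ$ via Lemma~\ref{lem:idp_separable}, obtain $Q$ from the MSO dynamic program on the tree decomposition, and solve the separable minimization by a bounded-treewidth dynamic program with domain size $r+1$. The only difference is presentational: the paper imports the decomposable extension with its oracle as a black box (Proposition~\ref{prop:courcelle_lp}, due to Kolman, Kouteck\'y and Tiwary) and realizes the separable minimization oracle via Freuder's bounded-treewidth CSP result (Lemma~\ref{lem:ilp_tw}), whereas you sketch both constructions from first principles, which is sound.
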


We will prove Theorem~\ref{thm:sco_xp} using Lemma~\ref{lem:idp_separable}
on separable optimization over decomposable polyhedra. 
To that end we need to show the following two steps:
\begin{enumerate}\vspace*{-1ex}
\item
There is an extension $Q$ of the polytope $P = \conv(S_\varphi(G))$ which is
decomposable and endowed with a decomposition oracle (Definition~\ref{def:decomposable}),
 and
\item
there is an integer separable minimization oracle
(Definition~\ref{def:separable}) for the polytope~$rQ$.
\end{enumerate}
The first point is implied by a recent
result of Kolman, Koutecký and Tiwary~\cite{KolmanKT:2015}: 

\begin{proposition}[\cite{KolmanKT:2015}]\label{prop:courcelle_lp}
  Let $G$ be a graph on $n$ vertices of treewidth $tw(G) = \tau$, 
  and $\varphi$ be an $MSO_2$ formula with one free vertex-set variable.
  Then, for some computable functions $f_1,f_2,f_3$,
  there are matrices $\vc A,\vc B$ and a vector $\vc{b}$,
  computable in time $f_1(\varphi, \tau)\cdot n$, such that
      \begin{enumerate}\vspace*{-1ex}
  \item the polytope $Q = \{(\bfx, \mathbf{y}) \mid \vc A\mathbf{x} +
    \vc B\mathbf{y} = \mathbf{b}, \mathbf{y} \geq \mathbf{0}\} \subseteq
    [0,1]^{f_2(\varphi,\tau)n}$ is an extension of the polytope
    $P = \conv(S_\varphi(G))$,
  \item $Q$ is decomposable and endowed with a decomposition oracle, and
  \item the (Gaifman) treewidth of the matrix $(\vc A\,\vc B)$
    is at most $f_3(\varphi, \tau)$.
  \end{enumerate}
\end{proposition}

The second requirement of Lemma~\ref{lem:idp_separable} follows from
efficient solvability of the constraint satisfaction problem (CSP) of bounded
treewidth, originally proven by Freuder~\cite{Freuder:90}. We will use a
natural weighted version of this folklore result.  

\begin{accumulate}
\begin{definition}[CSP]
An instance $I=(V,\DD,\HH,\CC)$ of CSP consists of
\begin{itemize}
\item a set of {\em variables} $z_v$, one for each $v\in V$; without loss of 
generality we assume that $V = \{1,\ldots,n\}$,
\item a set $\DD$ of finite {\em domains} $D_v\subseteq \Z$ (also 
denoted $D(v)$), one for each $v\in V$,
\item a set of {\em hard constraints} $\HH \subseteq \{C_{U} \mid U \subseteq V \}$ where each
hard constraint $C_{U} \in \HH$ with $U=\{i_1, i_2,\dots,i_k\}$ and 
$i_1 < \cdots < i_k$, is a $|U|$-ary relation
$C_U \subseteq D_{i_1}\times D_{i_2}\times \cdots \times D_{i_k}$,
\item a set of {\em weighted soft constraints} $\CC \subseteq \{w_U \mid
  U \subseteq V\}$ where each $w_U \in \CC$ with $U=\{i_1, i_2,\dots,i_k\}$ and 
$i_1 < \cdots < i_k$ is a function $w_U:D_{i_1}\times D_{i_2}\times \ldots\times D_{i_k}\rightarrow \R$.
\end{itemize}
For a vector $z^{}=(z^{}_1, z_2, \ldots,z^{}_n)$ and a set $U=\{i_1,
i_2,\dots,i_k\}\subseteq V$ with $i_1 < i_2 < \cdots < i_k$, we define
the {\em projection of} $z$ on $U$ as 
$z^{}|_U=(z^{}_{i_1}, z_{i_2}, \ldots, z^{}_{i_k})$.
A vector $z\in \Z^n$ {\em satisfies the hard constraint} $C_U \in \HH$ if and only if $z|_U \in C_U$. 
We say that a vector
$z^{\star}=(z^{\star}_1,\ldots,z^{\star}_n)$ is {\em a feasible assignment} for
$I$ if $z^{\star} \in D_1\times D_2\times \ldots\times D_n$ and 
$z^{\star}$ satisfies every hard constraint $C\in \HH$. The
\emph{weight} of $z^{\star}$ is $w(z^{\star})
= \sum_{w_U \in \CC}w_U(z^{\star}|_U)$. 
\end{definition}
\end{accumulate}

For a CSP instance $I=(V,\DD,\HH,\CC)$ one can define
the \emph{constraint graph} of $I$ as $G=(V,E)$
where $E= \{\{u,v\} \mid (\exists C_{U} \in \HH) \vee (\exists w_U \in
\CC)\textrm{ s.t. } \{u,v\} \subseteq U\}$. 
The {\em treewidth of a CSP instance $I$} is defined as
the treewidth of the constraint graph of $I$.

\begin{proposition}[\cite{Freuder:90}]\label{prop:csp_tw}
  Given a CSP instance $I$ of treewidth $\tau$ and maximum domain size
  $D=\max_{u\in V}|D_u|$, 
  a minimum weight solution can be found in time $\Oh(D^\tau (n +
  |\HH| + |\CC|))$.
  \end{proposition}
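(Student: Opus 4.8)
The plan is to establish this folklore result by dynamic programming over a tree decomposition of the constraint graph $G=(V,E)$. First I would invoke the fact that, since the treewidth of $I$ is $\tau$, one can compute a tree decomposition $(T,\mathcal{W})$ of $G$ of width $\tau$; for $\tau$ bounded this is done in time $\Oh(n)$ by Bodlaender's algorithm, and I would then normalize $(T,\mathcal{W})$ into a \emph{nice} tree decomposition (with leaf, introduce, forget, and join nodes) having $\Oh(\tau n)$ nodes, rooted at some node $\rho$ with empty bag. The crucial structural observation, which makes the decomposition interact well with the constraints, is the standard clique-containment property of tree decompositions: the scope $U$ of any constraint $C_U \in \HH$ or $w_U \in \CC$ is by definition a clique in $G$, and hence there is a node $x \in V(T)$ whose bag $W_x$ contains all of $U$. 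I would fix, once and for all, an injective-enough assignment of each constraint to one such node (say the topmost node of $T$ whose bag contains its scope), so that every hard and soft constraint is \emph{processed exactly once} during the DP.

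Next I would define the DP table. For a node $x$ let $V_x$ denote the union of bags in the subtree of $T$ rooted at $x$. For every partial assignment $\alpha : W_x \to \bigtimes_{v \in W_x} D_v$ (there are at most $D^{\tau+1}$ of these, since $|W_x| \le \tau+1$), let $M_x(\alpha)$ be the minimum, over all assignments $z : V_x \to \Z$ extending $\alpha$ with $z_v \in D_v$, that satisfy every hard constraint assigned to a node in the subtree of $x$, of the summed weight of every soft constraint assigned to a node in that subtree; if no such extension exists, $M_x(\alpha) := +\infty$. I would then give the recurrences in the four cases. For a leaf $x$ the table is computed directly, incorporating any constraint assigned to $x$. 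For an introduce node (introducing $v$) I would extend each child assignment over the new variable $v$, adding the weights of and checking feasibility against exactly those constraints assigned to $x$. For a forget node (forgetting $v$) I would take, for each surviving assignment, the minimum over the $\le D$ values of $v$. For a join node with children $y,y'$ sharing bag $W_x = W_y = W_{y'}$, I would set $M_x(\alpha) = M_y(\alpha) + M_{y'}(\alpha) - (\text{weight of soft constraints assigned to } x)$, correcting for the constraints of $x$ being counted in both subtrees, and then adding back the contribution of $x$ once. The final answer is $M_\rho(\emptyset)$, and a minimizing assignment is recovered by the usual traceback.

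The main obstacle, and the point requiring genuine care, is exactly this bookkeeping at join nodes: one must ensure that each soft constraint's weight is added into the total precisely once and that every hard constraint is eventually checked precisely once, despite a constraint's scope possibly being shared between the two branches meeting at a join. Fixing the node-assignment of constraints in advance (together with the invariant that $M_x$ accounts only for constraints assigned within the subtree of $x$) is what resolves this cleanly, and proving correctness amounts to verifying by induction up the tree that $M_x(\alpha)$ indeed equals the claimed subtree optimum. For the running time, each of the $\Oh(\tau n)$ nodes maintains a table with at most $D^{\tau+1}$ entries, each transition at a node touches its table in time $\Oh(D^{\tau+1})$ plus $\Oh(1)$ work per constraint assigned to it, and since every constraint is assigned to exactly one node the constraint-processing cost sums to $\Oh(|\HH|+|\CC|)$; absorbing the $D$ and $\tau$ factors into the base convention yields the claimed bound $\Oh\!\big(D^{\tau}(n + |\HH| + |\CC|)\big)$.
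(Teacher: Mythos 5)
The paper offers no proof of this proposition at all: it is imported verbatim from Freuder~\cite{Freuder:90} as a ``natural weighted version of this folklore result,'' so there is no in-paper argument to compare yours against. Your proof---dynamic programming over a nice tree decomposition, using the clique-containment property to assign each hard and soft constraint to the topmost bag containing its scope, with the invariant that $M_x(\alpha)$ accounts for exactly the constraints assigned within the subtree of $x$---is the standard argument and is essentially correct. Two small points of internal friction: under your own assignment convention, a constraint attached to a join node $x$ is counted in \emph{neither} child's table, so the join recurrence is simply $M_x(\alpha)=M_y(\alpha)+M_{y'}(\alpha)$ plus the contribution of the constraints assigned to $x$ evaluated at $\alpha$; your ``subtract the double-counted weight, then add it back once'' phrasing contradicts the invariant you state in the last paragraph (which, applied consistently, does yield the correct formula). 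Second, processing a constraint assigned to node $x$ costs $\Oh(1)$ per \emph{table entry}, i.e.\ $\Oh(D^{\tau+1})$ per constraint rather than $\Oh(1)$ total, giving $\Oh(D^{\tau+1}(n+|\HH|+|\CC|))$ overall---still matching the stated bound up to the usual off-by-one in the exponent (bags have $\tau+1$ vertices), which is the same slack already present in the proposition as quoted.
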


Proposition~\ref{prop:csp_tw} can be used to realize an integer separable
minimization oracle for integer programs of bounded treewidth, as follows.

\begin{lemma}\APXmark\label{lem:ilp_tw}
  Let $\vc A \in \Z^{n \times m}, \mathbf{b} \in \Z^m,
  \vcl,\mathbf{u} \in \Z^n$ be given s.t.\ $tw(\vc A) = \tau$,
  and let $D = \|\mathbf{u} - \vcl\|_{\infty}$. Then an integer
  separable minimization oracle over $P = \{\bfx \mid \vc A\bfx=\mathbf{b},
  \vcl \leq \vc x \leq \mathbf{u}\}$ is realizable in time $D^\tau
  (n+m)$.
\end{lemma}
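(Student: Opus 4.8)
The plan is to realize the integer separable minimization oracle over $P = \{\bfx \mid \vc A\bfx = \mathbf{b},\ \vcl \leq \vc x \leq \mathbf{u}\}$ by reducing it to a weighted CSP instance and invoking Proposition~\ref{prop:csp_tw}. The core idea is that the Gaifman graph of $\vc A$ \emph{is} the constraint graph we want, so the treewidth bound $tw(\vc A) = \tau$ transfers directly. First I would introduce one CSP variable $z_i$ for each matrix column $i \in [n]$, with domain $D_i := \{\ell_i, \ell_i + 1, \dots, u_i\} \subseteq \Z$; each such domain has size at most $D + 1 = \|\mathbf{u} - \vcl\|_\infty + 1$, matching the bound required by Proposition~\ref{prop:csp_tw}.

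Next I would encode the linear equalities as hard constraints and the separable objective as soft constraints. For each row $j \in [m]$ of $\vc A$, let $U_j = \{i \mid \vc A[j,i] \neq 0\}$ be its support; I add a hard constraint $C_{U_j}$ consisting of exactly those assignments to the variables in $U_j$ that satisfy $\sum_{i \in U_j} \vc A[j,i]\, z_i = b_j$. A vector $\vc x$ is feasible for the integer program precisely when it lies in the box $\vcl \leq \vc x \leq \mathbf{u}$ (enforced by the domains) and satisfies all these row equations (enforced by the hard constraints). For the objective $f(\vc x) = \sum_{i=1}^n f_i(x_i)$, I add $n$ unary soft constraints $w_{\{i\}}(z_i) := f_i(z_i)$. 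Then $w(z) = \sum_i f_i(z_i) = f(\vc x)$, so a minimum-weight feasible assignment is exactly a minimizer of $f$ over $P \cap \Z^n$. The oracle's infeasibility report corresponds to the CSP having no feasible assignment, and (since the box is bounded) unboundedness cannot occur.

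The key observation tying this together is that the constraint graph of the CSP instance coincides with the Gaifman graph $G(\vc A)$: two variables $z_i, z_{i'}$ share a constraint iff some row $j$ has both $\vc A[j,i] \neq 0$ and $\vc A[j,i'] \neq 0$, which is exactly the adjacency defining $G(\vc A)$. The unary soft constraints contribute no edges. Hence the CSP treewidth equals $tw(\vc A) = \tau$, and Proposition~\ref{prop:csp_tw} runs in time $\Oh(D^\tau(n + |\HH| + |\CC|))$ with $|\HH| \leq m$ hard constraints and $|\CC| = n$ soft constraints, giving the claimed $D^\tau(n+m)$ bound (absorbing the extra $n$ from soft constraints into $n+m$).

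The one point needing care is that each hard constraint $C_{U_j}$ is a relation over $|U_j|$ variables, and I must ensure that building these relations and feeding them to the CSP solver stays within the stated time. Proposition~\ref{prop:csp_tw}, however, is stated in terms of the number of constraints rather than their explicit tabulation, and a bounded-treewidth tree decomposition only ever groups together variables within a single bag; the constraint $C_{U_j}$ need not be materialized as a full relation over all of $U_j$ but can be evaluated locally against bag assignments during the dynamic programming. So the main obstacle I anticipate is purely presentational: making precise that the linear equalities integrate cleanly into the bounded-treewidth CSP machinery without the relations blowing up the running time beyond $D^\tau(n+m)$. I would handle this by appealing to the standard fact that a hard constraint with support contained in the bags of the decomposition is checkable within the per-bag $D^\tau$ work, which is exactly what the Gaifman-graph-equals-constraint-graph identity guarantees.
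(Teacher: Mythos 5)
Your proposal is correct and follows essentially the same route as the paper's proof: one CSP variable per column with the box as its domain, one hard constraint per row of $\vc A\bfx=\mathbf{b}$, unary soft constraints encoding the separable objective, and the observation that the constraint graph equals $G(\vc A)$ so Proposition~\ref{prop:csp_tw} applies. The only (cosmetic) difference is how the size of the hard-constraint relations is controlled: you appeal to local evaluation within bags, whereas the paper notes directly that each row of $\vc A$ has at most $\tau+1$ nonzeros (else the Gaifman graph would contain a clique of size $\tau+2$), so each relation has at most $D^{\tau+1}$ tuples and can be tabulated explicitly.
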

\ifthenelse{\value{Accumulate} = 1}{\noindent
  The proof of Lemma~\ref{lem:ilp_tw} proceeds by constructing a CSP instance $I$ based on the
  ILP~$\vc A\bfx = \mathbf{b}$, $\vcl \leq \bfx \leq \mathbf{u}$, such
  that solving $I$ corresponds to integer separable minimization over~$P$.
  Since the treewidth of $I$ is $\tau$ and the maximum domain size is
  $D$, Proposition~\ref{prop:csp_tw} does the job.
}{}
\begin{accumulate}

\ifthenelse{\value{Accumulate} = 1}{%
  \proofof{Lemma~\ref{lem:ilp_tw}}
}{\bpr}
  The proof proceeds by constructing a CSP instance $I$ based on the
  ILP $\vc A\bfx = \mathbf{b}, \vcl \leq \bfx \leq \mathbf{u}$, such
  that solving $I$ corresponds to integer separable minimization over
  $P$. Since the treewidth of $I$ is $\tau$ and the maximum domain size is
  $D$, Proposition~\ref{prop:csp_tw} does the job.

  First, let $V = \{x_1, \dots, x_n\}$. Then, for every $i=1, \dots,
  n$, let $D_i = \{\ell_i, \ell_i+1, \dots, u_i\}$ and $\DD = \{D_i \mid i=1,
  \dots, n\}$. Observe that $\max_i |D_i| = \|\mathbf{u} - \vcl\|_{\infty} =
  D$. Regarding hard constraints $\HH$, observe that every row $\vc a_j$
  of $\vc A$ contains at most $\tau+1$ non-zeros, since otherwise the
  Gaifman graph of $\vc A$ would contain a clique of size $\tau+2$,
  contradicting its treewidth of $\tau$. Let $U_j = \{i_1, \dots,
  i_k\}$, where $k \leq \tau+1$, be the set of indices of non-zero
  elements of $\vc a_j$, and let $x_c = 0$ for all $c \not\in U_j$. Let
  $C_{U_j}$ be the set of assignments from $D_{i_1} \times \cdots
  \times D_{i_k}$ to $x_{i_1}, \ldots, x_{i_k}$ that satisfy 
  $\vc a_j \bfx = b_j$; obviously $|C_{U_j}|
  \leq D^k$ and it can be constructed in time $\Oh(D^k)$. Then, $\HH =
  \{C_{U_j} \mid j = 1, \dots, m\}$. Finally, for a given separable function
  $f$ such that $f(\vc x) = \sum_{i=1}^n f_i(x_i)$, let $\CC =
  \{w_{\{x_i\}} \mid
  i=1, \dots, n\}$ where $w_{\{x_i\}} = f_i$ for all $i$.

  It is easy to verify that the feasible assignments of $I$ correspond
  to integer solutions of $\vc A\bfx = \mathbf{b}, \vcl \leq \bfx
  \leq \mathbf{u}$, that its maximum domain size is $D$ and its weight
  function $w$ is exactly~$f$. Finally, the treewidth of $I$ is
  $\tau$, since the Gaifman graph of $G(I)$ of $I$ is exactly
  $G(\vc A)$. Then Proposition~\ref{prop:csp_tw} solves $I$ in time
  $\Oh(D^\tau (n + |\HH| + |\CC|) = \Oh(D^\tau (n+m))$, concluding the
  proof.
\epr
\end{accumulate}

Consequently, we can finish the proof of Theorem~\ref{thm:sco_xp} by using
Lemma~\ref{lem:idp_separable}.

\begin{accumulate}

\begin{proofof}{Theorem~\ref{thm:sco_xp}}
  By Proposition~\ref{prop:courcelle_lp} there is a computable function $f$
  such that there exists a polytope $Q = \{(\bfx, \mathbf{y}) \mid
  \vc A\mathbf{x} + B\mathbf{y} = \mathbf{b}, \mathbf{y} \geq \mathbf{0}\}
  \subseteq [0,1]^{f_2(\varphi, \tau)n}$ which is an extension of
  $P_\varphi(G)$; let $F:=f_3(\varphi, \tau)$. Also, $S_\varphi(G) =
  P_\varphi(G) \cap \{0,1\}^n$. Moreover, $Q$ is decomposable, a
  decomposition oracle for $Q$ is realizable in polynomial time, and
  the treewidth of the matrix $(\vc A \,\vc B)$ is at most $F$. Note that $rQ$
  is given by $\vc A\vc x + B\vc y = r\vc b, \mathbf{0} \leq (\bfx,\mathbf{y}) \leq
  (r, \dots, r)$, so its treewidth is $F$ as
  well. Lemma~\ref{lem:ilp_tw} realizes an integer separable
  minimization oracle for $rQ$ in time $r^F(n+Fn)$. Since all
  conditions of Lemma~\ref{lem:idp_separable} are met, this concludes
  the proof.
  \end{proofof}
\end{accumulate}

\smallskip
Besides treewidth, another useful width measure of graphs is the
{\em clique-width} of a graph~$G$.
Rao's result~\cite{Rao:2007} applies also to the MSO-partitioning problem
for MSO$_1$ formulas and graphs of bounded clique-width.
We show the analogous extension of Theorem~\ref{thm:sco_xp} next.

\begin{definition}[Clique-width]\label{def:cw}
This is defined for a graph $G$ as the smallest number of labels $k=cw(G)$
such that some labeling of $G$ can be constructed
by an algebraic {\em k-expression} using the following operations
(where $1\leq i,j\leq k$):
\begin{enumerate}\vspace*{-1ex}
\item create a new vertex with label $i$;
\item take the disjoint union of two labeled graphs;
\item add all edges between vertices of label $i$ and label $j$; and
\item relabel all vertices with label $i$ to have label $j$.
\end{enumerate}
\end{definition}

\begin{corollary}\APXmark\label{cor:sco_xp_cw}
Let $G$ be a graph of clique-width $cw(G) = \gamma$
given along with a $\gamma$-expression (cf.~Definition~\ref{def:cw}),
let $\psi$ be an MSO$_1$ formula and
$S_\psi(G) = \{\mathbf{x} \mid \mathbf{x} \textrm{ satisfies }\psi \}$. 
There is an algorithm solving the shifted problem with $S
= S_\psi(G)$ and any given $\vc c$ and $r$ in time $r^{f(\psi,
  \gamma)}\cdot|V(G)|$ for some computable~$f$. 
\end{corollary}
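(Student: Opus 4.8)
The plan is to mirror the proof of Theorem~\ref{thm:sco_xp} exactly, replacing treewidth by clique-width and MSO$_2$ by MSO$_1$ throughout, and to rely again on Lemma~\ref{lem:idp_separable} to reduce the shifted problem to two oracle calls: a decomposition oracle for an extension $Q$ of $P = \conv(S_\psi(G))$, and an integer separable minimization oracle for $rQ$. The whole argument therefore hinges on producing, for a graph given together with a $\gamma$-expression, a decomposable extension $Q$ whose defining matrix has \emph{bounded treewidth} (not bounded clique-width), since it is treewidth of the matrix that Lemma~\ref{lem:ilp_tw} needs in order to realize the separable minimization oracle in time $r^{f(\psi,\gamma)}\cdot|V(G)|$.

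The key step is to obtain the clique-width analogue of Proposition~\ref{prop:courcelle_lp}. First I would recall that the extended formulation of Kolman, Kouteck\'y and Tiwary~\cite{KolmanKT:2015} is built from a dynamic-programming/automaton run over a tree-shaped decomposition of the input: for treewidth one uses the tree-decomposition, and the resulting constraint matrix inherits bounded (Gaifman) treewidth from the bounded width of the bags. For clique-width one instead runs the analogous construction over the parse tree of the given $\gamma$-expression, using the fact (Courcelle--Makowsky--Rotics) that every MSO$_1$ formula $\psi$ is decided by a finite tree automaton on $\gamma$-expressions whose state set depends only on $\psi$ and $\gamma$. I would argue that the extended formulation produced this way is again an extension of $P=\conv(S_\psi(G))$, is decomposable with a polynomial-time decomposition oracle, and — crucially — has a constraint matrix of \emph{treewidth} bounded by some $f_3(\psi,\gamma)$, because the variables and constraints are organized along the parse tree, which has bounded-size ``interfaces'' (one local block of variables per tree node, linked only to parent and children). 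Thus the Gaifman graph of the matrix admits a tree-decomposition following the parse tree with bags of size $f_3(\psi,\gamma)$. This gives the clique-width counterpart of Proposition~\ref{prop:courcelle_lp}.

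Once that counterpart is in hand, the remainder is verbatim. Writing $F := f_3(\psi,\gamma)$, the polytope $rQ$ is defined by a system whose matrix still has treewidth $F$, and whose variables are bounded in $[0,r]$, so $D = r$. Lemma~\ref{lem:ilp_tw} then realizes an integer separable minimization oracle for $rQ$ in time $r^{F}\cdot\poly(n)$, and since $S_\psi(G) = P \cap \{0,1\}^n$ and $Q$ is decomposable, all hypotheses of Lemma~\ref{lem:idp_separable} are satisfied. Applying it solves the shifted problem with one call to each oracle, giving the claimed $r^{f(\psi,\gamma)}\cdot|V(G)|$ running time.

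I expect the main obstacle to be the clique-width version of Proposition~\ref{prop:courcelle_lp}, and specifically the requirement that the extension have bounded matrix \emph{treewidth} rather than merely bounded clique-width. One must check that the known MSO$_1$/clique-width extended-formulation machinery genuinely yields a matrix whose Gaifman graph is treewidth-bounded along the parse tree; if the cited construction is stated only for MSO$_2$/treewidth, I would either invoke the clique-width variant from the same line of work or reconstruct it from the automaton on $\gamma$-expressions. Everything downstream (Lemma~\ref{lem:ilp_tw}, Lemma~\ref{lem:idp_separable}) is then insensitive to whether the bounded-treewidth matrix arose from treewidth or from clique-width of the original graph, which is exactly why reducing to matrix treewidth is the right move.
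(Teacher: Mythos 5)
Your proposal is correct in outline but takes a genuinely different route from the paper's own proof. The paper explicitly declines to redo the extended-formulation machinery for clique-width; instead it proves Lemma~\ref{lem:interpret_cw}: from the given $\gamma$-expression one computes, in linear time, an ordinary \emph{tree} $T$ with $V(G)\subseteq V(T)$ and an MSO$_1$ formula $\varphi$ (now depending on both $\psi$ and $\gamma$) such that $T\models\varphi(X)$ iff $X\subseteq V(G)$ and $G\models\psi(X)$. This uses the folklore one-dimensional MSO$_1$ interpretation of bounded clique-width graphs in colored rooted trees built on the parse tree of the $\gamma$-expression, with the colors and the root then eliminated by attaching pendant leaves and recovering them by FO formulas. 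Since $tw(T)=1$ and $S_\varphi(T)$ is just $S_\psi(G)$ padded with coordinates forced to zero, Theorem~\ref{thm:sco_xp} is applied to $(T,\varphi)$ as a black box and the corollary follows. Your route instead reconstructs a clique-width analogue of Proposition~\ref{prop:courcelle_lp} by running a tree automaton over the parse tree; the paper itself remarks that proving the corollary ``along the same lines'' is possible, so the plan is viable, but it leaves the heaviest step only sketched: you would need to re-establish not just that the resulting constraint matrix has bounded Gaifman treewidth, but also that the new extension $Q$ is \emph{decomposable} and admits an efficient decomposition oracle --- the decomposability in \cite{KolmanKT:2015} is tied to their particular construction and does not transfer for free. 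What the paper's interpretation trick buys is precisely that all of the polyhedral work (decomposability, the oracle, Lemma~\ref{lem:ilp_tw}) is inherited unchanged from the treewidth case, at the cost of only an elementary logic lemma; what your route would buy, if completed, is an explicit extended formulation for MSO$_1$ over clique-width, which is of independent interest but is substantially more work.
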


While it is possible to prove Corollary~\ref{cor:sco_xp_cw}
along the same lines as used above, we avoid repeating the previous arguments and, instead,
apply the following technical tool.
\ifthenelse{\value{Accumulate} = 1}{}{%
This tool simply extends a folklore fact that a class of graphs is of bounded clique-width
if and only if it has an MSO$_1$ interpretation in the class of rooted trees.
}

\begin{lemma}\APXmark\label{lem:interpret_cw}
Let $G$ be a graph of clique-width $cw(G) = \gamma$
given along with a $\gamma$-expression~$\Gamma$ constructing~$G$, 
and let $\psi$ be an MSO$_1$ formula.
One can, in time $\Oh(|V(G)|+|\Gamma|+|\psi|)$, compute a tree $T$ and an MSO$_1$
formula $\varphi$ such that
$V(G)\subseteq V(T)$ and
$$\mbox{for every $X$ it is $T\models\varphi(X)$, \,iff\,
	$X\subseteq V(G)$ and $G\models\psi(X)$.}$$
\end{lemma}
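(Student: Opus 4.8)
The plan is to reduce the statement about clique-width and $\textrm{MSO}_1$ to a statement about trees and $\textrm{MSO}_1$ by exploiting the structure of the $\gamma$-expression $\Gamma$ directly. The key observation is that a $\gamma$-expression is itself a rooted tree (its parse tree), whose leaves correspond to the vertices of $G$ (the "create a new vertex" operations) and whose internal nodes are labeled by the union, edge-addition, and relabeling operations. So I would take $T$ to be (essentially) the parse tree of $\Gamma$, possibly enriched with a bounded amount of extra labeling information attached to the nodes via unary predicates, and with $V(G)$ identified with the leaf set, giving $V(G)\subseteq V(T)$ as required.

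\medskip

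\textbf{First} I would make the tree $T$ a properly labeled relational structure: each internal node carries a bounded-size label recording which of the operations $(1)$--$(4)$ it performs, together with the relevant indices $i,j\in\{1,\dots,\gamma\}$; since $\gamma$ is fixed this is a constant amount of information, expressible by finitely many unary predicates on $V(T)$. The crucial technical point is that the adjacency relation of $G$ is \emph{recoverable} from $T$ by an $\textrm{MSO}_1$ formula: two leaves $u,v$ are adjacent in $G$ if and only if there is an edge-addition node $w$ (of type $(3)$ with some indices $i,j$) that is a common ancestor of $u$ and $v$, such that along the path from $w$ down to $u$ the label of $u$ has not been relabeled away from $i$ (and symmetrically $v$ keeps label $j$) before reaching $w$. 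Tracking "the current label of a vertex at a given ancestor node" is exactly the kind of reachability/path condition that $\textrm{MSO}_1$ on trees handles, because one can quantify over the set of nodes on the unique path between two nodes and assert properties of the labels along it. I would write down an $\textrm{MSO}_1$ formula $\textrm{adj}_T(x,y)$ capturing this, and then obtain $\varphi$ from $\psi$ by the standard \emph{syntactic interpretation}: replace every atom $\prebox{edge}(x,y)$ in $\psi$ by $\textrm{adj}_T(x,y)$, and relativize all quantifiers of $\psi$ to range over the leaf set $V(G)\subseteq V(T)$ (a predicate definable in $T$). This guarantees that $T\models\varphi(X)$ holds exactly when $X\subseteq V(G)$ and $G\models\psi(X)$, which is the desired equivalence.

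\medskip

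\textbf{The main obstacle} I expect is precisely the construction and correctness of the adjacency-recovery formula $\textrm{adj}_T$: one must argue carefully that the label a vertex carries at the moment an edge-addition node is processed is $\textrm{MSO}_1$-definable from the parse tree, since relabeling operations $(4)$ may change a vertex's label several times along its ancestor path. The cleanest way to handle this is to encode, for each of the $\gamma$ possible labels, the set of nodes at which a given vertex "currently has label $i$", and to verify the defining fixpoint-style conditions of these sets can be stated with set quantification in $\textrm{MSO}_1$ over the tree. The remaining bookkeeping — relativizing quantifiers of $\psi$ to leaves and the linear-time computability of $T$, its labels, and $\varphi$ from the inputs $\Gamma$ and $\psi$ — is routine: $T$ and its predicates are read off $\Gamma$ in time $\Oh(|\Gamma|)$, and $\varphi$ is produced from $\psi$ and the fixed formula $\textrm{adj}_T$ by a syntactic substitution of size $\Oh(|\psi|)$, yielding the claimed $\Oh(|V(G)|+|\Gamma|+|\psi|)$ bound. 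This is the folklore fact, alluded to in the excerpt, that bounded clique-width classes admit an $\textrm{MSO}_1$ interpretation in rooted trees; I would simply supply the interpretation explicitly and check it transports $\psi$ to the required $\varphi$.
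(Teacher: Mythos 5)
Your overall route is the same as the paper's: take the parse tree of $\Gamma$, use the folklore one-dimensional MSO$_1$ interpretation of bounded clique-width graphs in (labelled) parse trees to define an adjacency formula, and then obtain $\varphi$ from $\psi$ by syntactic substitution and relativization of quantifiers to the leaf set. That part is fine, and your discussion of how to define ``the current label of a vertex at a given ancestor node'' in MSO$_1$ is exactly the content of the cited folklore interpretation.

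There is, however, one concrete gap relative to what the lemma actually asserts. The lemma promises a \emph{tree} $T$ --- an ordinary unlabelled graph over the vocabulary $\{\prebox{edge}\}$ --- and an MSO$_1$ formula $\varphi$ over that plain tree. Your construction leaves $T$ equipped with finitely many unary predicates (the operation types, the indices $i,j$, and implicitly also a marked root, since your formula $\mathrm{adj}_T$ quantifies over ancestors and downward paths, which presupposes that the rooting is definable). This matters downstream: Corollary~\ref{cor:sco_xp_cw} applies Theorem~\ref{thm:sco_xp} (and hence Proposition~\ref{prop:courcelle_lp}) to $T$ as a graph of treewidth $1$, and those results are stated for graphs in the pure vocabulary, not for trees with additional colours. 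The paper closes this gap with an explicit second interpretation step: it first gives the root a distinguished colour so that the parent--child relation becomes definable, then encodes each colour $c_i$ by attaching $i$ pendant leaves to every node of that colour, and finally re-defines the colours, the root, and the original node set inside the resulting plain tree by FO formulas that count leaf-neighbours (composing the two interpretations gives $\varphi=\psi^{I_1\circ I_2}$). You should either add such a de-colouring gadget, or argue explicitly that the results you later apply to $T$ tolerate a bounded number of unary predicates; as written, your $T$ is not the object the lemma claims to produce.
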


\begin{accumulate}
\ifthenelse{\value{Accumulate} = 1}{\let\bfparg\bf}{\let\bfparg\relax}%
\paragraph{\bfparg Addition to Lemma~\ref{lem:interpret_cw}.}
Before giving a (short) proof of this lemma,
we need to formally introduce a simplified concept of MSO interpretability.
Let $\sigma$ and $\varrho$ be two relational vocabularies.
A~\emph{one-dimensional MSO interpretation}
of $\varrho$ in $\sigma$ is a tuple
$I = \big(\nu(x),\, \{\eta_R(\bar{x})\}_{R\in \varrho}\big)$ 
of $MSO[\sigma]$-formulas where
$\nu$ has one free element variable and the number of free element variables in
$\eta_R$ is equal to the arity of $R$ in $\varrho$.
\begin{itemize}
\item
To every $\sigma$-structure $A$ the interpretation $I$ assigns a $\varrho$-structure
$A^I$ with the domain $A^I = \{a~|~A \models \nu(a) \}$ and the relations $R^I =\{
{\bar{a}~|~A \models \eta_R(\bar{a})}\}$ for each $R \in \varrho$.  We say that
a class $\CC$ of $\varrho$-structures has an interpretation in a class
$\DD$ of $\sigma$-structures if there exists an interpretation $I$
such that for each $C \in \CC$ there exists $D \in \DD$ such
that $C \simeq D^I$, and for every  $D \in \DD$ the structure $D^I$
is isomorphic to a member of~$\CC$.
\item
The interpretation $I$ of $\varrho$ in $\sigma$ defines a translation of every
$MSO[\varrho]$-formula $\psi$ to an $MSO[\sigma]$-formula $\psi^I$ as follows:
\begin{itemize}
\item every $\exists x. \phi$ is replaced by $\exists x.(\nu(x) \land \phi^I)$,
\item every $\exists X. \phi$ is replaced by 
$\exists X.(\forall y (y \in X \rightarrow \nu(y)) \land \phi^I)$,
and
\item every occurrence of a $\sigma$-atom $R(\bar{x})$ is replaced by
the corresponding formula $\eta_R(\bar{x})$.
\end{itemize}  
\end{itemize}  
It is a folklore fact that for all $MSO[\varrho]$-formulas $\psi$ 
and all $\sigma$-structures $A$
$$ A \models \psi^I \Longleftrightarrow A^I \models \psi .$$

Let $G$ be a graph of clique-width $cw(G) = \gamma$
given along with a $\gamma$-expression~$\Gamma$ constructing~$G$, 
and let $\psi$ be an MSO$_1$ formula.
One can, in time $\Oh(|V(G)|+|\Gamma|+|\psi|)$, compute a tree $T$ and an MSO$_1$
formula $\varphi$ such that
$V(G)\subseteq V(T)$ and
$$\mbox{for every $X$ it is $T\models\varphi(X)$, \,iff\,
	$X\subseteq V(G)$ and $G\models\psi(X)$.}$$

\begin{proofof}{Lemma~\ref{lem:interpret_cw}}
Let $T_0$ be the parse tree of the given $\gamma$-expression~$\Gamma$
constructing~$G$.
Hence $T_0$ is a rooted tree such that the set of its leaves is $V(G)$.
It is well known that there is a one-dimensional MSO$_1$ interpretation
$I_1=\big(\nu_1(x),\eta_1(x,y)\big)$ of the graphs of clique-width $\leq\gamma$
in the class of colored rooted trees (such that the finite set of colors
depends only on~$\gamma$).
The used colors are in fact the vertex labels (in the leaves) and the
operator symbols (in the internal nodes) from Definition~\ref{def:cw},
and the interpretation $I_1$ is easy to construct for given $\gamma$.
See, e.g., \cite[Section~4.3]{DBLP:journals/eccc/Kreutzer09} for close details.
Consequently, $G\simeq T_0^{I_1}$.

To finish the proof, we just need to ``remove'' the colors from $T_0$
and ``forget'' the root (to make an ordinary uncolored tree $T$).
We first give the root of $T_0$ a new distinguished color $c_r$ (as a copy of its
original color).
Then the parent-child relation in $T_0$ can be easily interpreted based on
the unique path to a vertex colored $c_r$.
Let all the colors used in $I_1$ be $C=\{c_1,\dots,c_m\}$ where $m$ depends
on~$\gamma$ (including our distinguished root colors).
We construct a tree $T$ from $T_0$ by attaching $i$ new leaves to every
vertex of $T_0$ of color~$c_i\in C$.

We now straightforwardly define an MSO$_1$ interpretation
$I_2=\big(\nu_2(x),\eta_2(x,y),$
 $\lambda_2^i \mbox{ for $i\in[m]$}\big)$ of rooted $C$-colored trees in
the class of ordinary trees:
\begin{itemize}
\item $\nu_2(x)$ simply asserts that $x$ is not a leaf ($x$ has more than one neighbor),
\item $\eta_2(x,y)\equiv \prebox{edge}(x,y)$ (note that $T_0$ is an 
induced subgraph of~$T$), and
\item $\lambda_2^i(x)$ asserts that $x$ has precisely $i$ neighbors which
are leaves---this has a routine brute-force expression by existential
quantification of the $i$ leaf neighbors, coupled by non-existence of
$i+1$ leaf neighbors.
\end{itemize}
Clearly, $T^{I_2}\simeq T_0$ (including the colors of $T_0$).
We finally set $I=I_1\circ I_2$ (interpretability is a transitive concept)
and $\varphi=\psi^I$ and we are done.
\end{proofof}
\end{accumulate}

With Lemma~\ref{lem:interpret_cw} at hand, it is now easy to derive
Corollary~\ref{cor:sco_xp_cw} from previous Theorem~\ref{thm:sco_xp} applied
to the tree~$T$.

\begin{accumulate}
\begin{proofof}{Corollary~\ref{cor:sco_xp_cw}}
We invoke Lemma~\ref{lem:interpret_cw} to construct the tree $T$ and formula
$\varphi$, and the apply Theorem~\ref{thm:sco_xp} to them
(for a tree, $tw(T)=\tau=1$, but note that $\varphi$ now depends on both
$\psi$ and $\gamma$).
Since $S_\psi(G)$ is essentially identical to $S_\varphi(T)$, up to the coordinates
corresponding to $V(T)\setminus V(G)$ which are all zero by
Lemma~\ref{lem:interpret_cw},
the solution to the shifted problem with $S_\psi(G)$ is the same as the
computed solution to the shifted problem with $S_\varphi(T)$.
\end{proofof}
\end{accumulate}

Finally, we add a small remark regarding the input $G$ in
Corollary~\ref{cor:sco_xp_cw};
we are for simplicity assuming that $G$ comes along with its $\gamma$-expression 
since it is currently not known how to efficiently construct a $\gamma$-expression
for an input graph of fixed clique-width~$\gamma$.
Though, one may instead use the result of
\cite{DBLP:journals/siamcomp/HlinenyO08} which constructs in FPT a so-called
rank-decomposition of $G$ which can be used as an approximation of a
$\gamma$-expression for $G$ (with up to an exponential jump, but this does
not matter for a fixed parameter~$\gamma$ in theory).

\section{MSO-definable sets: W[1]-hardness}
\label{sec:mso_part_hardness}
\begin{accumulate}
\ifthenelse{\value{Accumulate} = 1}{
\section{Additions to Section~\ref{sec:mso_part_hardness}}
}{}\end{accumulate}

Recall that natural hard graph problems such as \textsc{Chromatic number} 
are instances of \textsc{MSO-partitioning} and so also instances of 
shifted combinatorial optimization.
While we have shown an \XP algorithm for SCO with MSO-definable sets
on graphs of bounded treewidth and clique-width
in Theorem~\ref{thm:sco_xp} and Corollary~\ref{cor:sco_xp_cw},
it is a natural question whether an \FPT algorithm could exist
for this problem, perhaps under a more restrictive width measure.

Here we give a strong negative answer to this question.
First, we note the result of Fomin et al~\cite{fgls09} proving
\W1-hardness of \textsc{Chromatic number} parameterized by the clique-width
of the input graph.
This immediately implies that an \FPT algorithm in
Corollary~\ref{cor:sco_xp_cw} would be very unlikely
(cf.~Section~\ref{sec:preliminaries}).
Although, \textsc{Chromatic number} is special in the sense that it is
solvable in \FPT when parameterized by the treewith of the input.
Here we prove that it is not the case of \textsc{MSO-partitioning} 
problems and SCO in general, even when considering restricted MSO$_1$ formulas
and shifted $\vc c$,
and parameterizing by a much more restrictive {\em treedepth} parameter.

\begin{definition}[Treedepth]
\label{def:tree-depth}
Let the {\em height} of a rooted tree or forest be the maximum root-to-leaf distance
in it.
The {\em closure} $cl(F)$ of a rooted forest $F$
is the graph obtained from $F$ by making every vertex adjacent to all of its
ancestors.
The {\em treedepth} $td(G)$ of a graph $G$ is one more than the minimum
height of a forest $F$ such that $G\subseteq cl(F)$.
\end{definition}

Note that always $td(G)\geq tw(G)+1$ since we can use the vertex sets of 
the root-to-leaf paths of the forest $F$ (from Definition~\ref{def:tree-depth}) 
in a proper order as the bags of a tree-decomposition of $G$.

\begin{theorem}\APXmark\label{thm:part-hardinstance}
There exists a graph FO formula $\varphi(X)$ with a free set variable $X$,
such that the instance of the \textsc{MSO-partitioning} problem given by $\varphi$,
is \W1-hard when parameterized by the treedepth of an input simple graph~$G$.
\\
Consequently, the shifted problem with $S_{\varphi}(G)$ is also \W1-hard
(for suitable $\vc c$) when parameterized by the treedepth of~$G$.
\end{theorem}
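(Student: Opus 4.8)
The plan is to prove $\W1$-hardness by reducing from a known $\W1$-hard problem parameterized by a structural measure that can be controlled to yield bounded treedepth in the constructed instance. A natural candidate is the \textsc{Equitable Coloring} or \textsc{Equitable Connected Partition} problem, or more likely some variant of \textsc{Multicolored Clique} / \textsc{List Coloring} known to be $\W1$-hard on graphs of bounded treedepth or bounded vertex cover number. Since the statement asks for a \emph{fixed} FO formula $\varphi(X)$ and hardness with respect to treedepth, the cleanest route is to find a problem already shown $\W1$-hard on bounded-treedepth graphs and then express its constraint on each partition class as a first-order property of a vertex set $X$. First I would identify such a source problem; the phrasing strongly suggests reducing from a partitioning-style problem where each block must satisfy a uniform local condition expressible in FO.

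The key steps, in order, are as follows. First, I would fix the source $\W1$-hard problem $\Pi$ (parameterized by treedepth or by a parameter that upper-bounds treedepth, such as vertex cover number, since $td(G) \le vc(G)+1$). Second, I would design the FO formula $\varphi(X)$ so that the condition ``$G \models \varphi(U_i)$ for every block $U_i$ of the partition'' exactly encodes feasibility for $\Pi$. Because $\varphi$ must be \emph{fixed} and first-order, the reduction must push all the problem-specific data into the graph $G$ itself (via gadgets and auxiliary vertices) rather than into the formula. Third, I would build the reduction gadget: given an instance of $\Pi$, construct a simple graph $G$ together with the fixed $r$ so that each valid $r$-partition of $V(G)$ into blocks satisfying $\varphi$ corresponds bijectively (or at least surjectively) to a solution of $\Pi$, and conversely. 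Fourth, I would verify that the constructed $G$ has treedepth bounded by a function of the original parameter of $\Pi$ only---this is where gadget design must be careful, since added gadgets (cliques, apex-like connector vertices) can blow up treedepth. Finally, the consequence for SCO follows immediately from the observation already recorded in the excerpt: choosing $\vc c$ with $c_i^1 := 1$ and $c_i^j := -1$ for $2 \le j \le r$ makes the shifted optimum equal to $n$ exactly when the columns of $\vc x$ indicate a partition into $\varphi$-blocks, so the $\W1$-hardness of \textsc{MSO-partitioning} transfers verbatim to the shifted problem with $S = S_\varphi(G)$.

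The main obstacle I anticipate is the simultaneous control of two competing requirements: keeping the formula $\varphi$ \emph{first-order and fixed} while still forcing enough structure, and keeping the treedepth of $G$ \emph{bounded by the parameter} despite the gadgets needed to realize that structure. Treedepth is a very restrictive measure---it forbids long paths, so any gadget introducing long induced paths or large grid-like structure is immediately disqualified. Thus the reduction must operate within a ``shallow'' regime, likely using a bounded number of high-degree connector vertices (which keep treedepth small, as each can be placed high in the elimination forest) together with many pendant or low-depth gadget vertices. The tension is that such shallow graphs are often easy, so the hardness must come from the \emph{combinatorial choice} of how vertices are distributed among the $r$ blocks rather than from global connectivity---precisely the phenomenon that makes partitioning harder than ordinary coloring-type problems. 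I expect the crux of the argument to be exhibiting a source problem whose hardness survives this shallowness, and arguing correctness of the bijection between $\varphi$-partitions and $\Pi$-solutions; once that correspondence and the treedepth bound are established, the transfer to SCO is routine given the earlier remark in the paper.
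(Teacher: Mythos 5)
There is a genuine gap: your write-up is a search plan rather than a proof, and the one step it leaves open --- naming a source problem and an encoding that actually work --- is where all the difficulty lives. None of the candidates you float survives scrutiny. \textsc{Equitable Coloring}/\textsc{Equitable Connected Partition} impose cardinality constraints on the blocks, which a \emph{fixed} FO formula $\varphi(X)$ cannot express. \textsc{Multicolored Clique} is easy on bounded-treedepth graphs (clique size is at most the treedepth). \textsc{List Coloring} parameterized by vertex cover is indeed \W1-hard, but it requires distinguishing which block plays which color, whereas in \textsc{MSO-partitioning} every block must satisfy the \emph{same} formula $\varphi$ and the blocks are interchangeable; pushing the lists into gadgets without breaking this symmetry (and without blowing up treedepth) is exactly the unresolved crux. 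You correctly sense the tension --- ``such shallow graphs are often easy'' --- but you do not resolve it.

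The paper resolves it by a different route that you should compare against: it does \emph{not} start from a problem already hard on bounded-treedepth inputs. It reduces from \textsc{Chromatic number} parameterized by \emph{clique-width} \cite{fgls09}, whose hard instances $G$ (by \cite{glo13}) admit a tree-model of depth $5$ with $m$ labels, $m$ being the parameter. The construction builds a host graph $H$ of treedepth $O(m)$ containing $V(G)$ as its leaves, attaches to each internal tree node a copy of a small ``label-adjacency'' gadget $L^x$ on $m$ vertices, and wires each leaf to the gadget vertices carrying its label. Adjacency in $G$ --- which itself has unbounded treedepth --- is then FO-\emph{interpreted} inside $H$ by a fixed formula $\gamma(u,v)$ that locates the least common ancestor and inspects the gadget; $\varphi(X)$ says ``$X$ consists of leaves and is $\gamma$-independent, or $X$ is exactly the set of auxiliary vertices,'' so an $(r{+}1)$-partition of $H$ satisfying $\varphi$ corresponds to an $r$-coloring of $G$. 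The labels (the parameter) are traded for tree height, keeping $td(H)\le 5m+7$. Your concluding step --- transferring hardness to SCO via the partition objective $c_i^1=1$, $c_i^j=-1$ --- is correct and matches the paper, but the reduction itself is missing.
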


\ifthenelse{\value{Accumulate} = 1}{%
The way we approach Theorem~\ref{thm:part-hardinstance} is by a reduction from 
\W1-hardness of \textsc{Chromatic number} with respect to
clique-width~\cite{fgls09}, in which we exploit some special hidden properties
of that reduction, as extracted in~\cite{glo13}.
In a nutshell, the ``difficult cases'' of \cite{fgls09} can be interpreted
in a special way into labeled rooted trees of small height,
and here we further trade the labels (the number of which is the parameter) 
for increased height of a tree and certain additional edges belonging to the tree closure.

The full details and the proof are left for the Appendix.
}{%
We are going to prove Theorem~\ref{thm:part-hardinstance} by a reduction from 
\W1-hardness of \textsc{Chromatic number} with respect to clique-width~\cite{fgls09}.
As an intermediate step for our purpose, \cite{glo13} prove that the graphs constructed
for the reduction in~\cite{fgls09}, can be interpreted in a special way
(formal details to follow) into labeled rooted trees of height~$5$, 
where the parameter is the number of labels.
We, in turn, prove here that these labels can be traded for increased height of
a tree and certain additional edges belonging to the tree closure.
Consequently, the property of a set $X$ of vertices to be independent
in the original graph can now be expressed by a certain fixed formula $\varphi(X)$ 
(independent of the parameter) over a plain simple graph which is of bounded treedepth.
So the \textsc{MSO-partitioning} instance given by $\varphi$ is indeed
\W1-hard when parameterized by the treedepth.
}{}

\begin{accumulate}
We start with formulating the needed special reformulation of the aforementioned result
of Fomin et al~\cite{fgls09} on hardness of \textsc{Chromatic number}.

\begin{definition}[Tree-model~\cite{ganianetal12}]
\label{def:tree-model}
We say that a graph $G$ has a {\em tree-model of $m$ labels and depth $d$}
if there exists a rooted tree $T$ such that 
\begin{enumerate}
\item the set of leaves of $T$ is exactly $V(G)$,
\item the length of each root-to-leaf path in $T$ is exactly~$d$,
\item each leaf of $T$ is assigned one of $m$ labels,
\item\label{it:tree-model-edge}
and the existence of a $G$-edge between $u,v\in V(G)$ depends solely
on the labels of $u,v$ and the distance between $u,v$ in $T$.
\end{enumerate}
%
Let $\mathcal{T\!M}_m(d)$ denote the class of all graphs with a tree-model
of $m$ labels and depth $d$.
\end{definition}

\begin{theorem}[\cite{glo13}]\label{thm:coloringhard-tm}
The graphs constructed as the ``hard'' instances of \textsc{Chromatic number}
in~\cite{fgls09} belong to $\mathcal{T\!M}_m(5)$ where $m$ is the considered parameter.
Consequently, the \textsc{Chromatic number} problem considered on the classes 
$\mathcal{T\!M}_m(5)$ is \W1-hard when parameterized by~$m$.
\end{theorem}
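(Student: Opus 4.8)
The plan is to prove the first assertion by an explicit structural analysis of the graphs built in~\cite{fgls09}, and then to read off the ``consequently'' part for free. Indeed, suppose we have shown that every hard instance $G$ produced by the \cite{fgls09} reduction lies in $\mathcal{T\!M}_m(5)$, where $m$ is bounded by a computable function of the clique-width parameter used there. Then any hypothetical algorithm deciding \textsc{Chromatic number} on the whole class $\mathcal{T\!M}_m(5)$ in time $f(m)\cdot|V(G)|^{O(1)}$ would, restricted to these particular instances, decide them in time $f(m)\cdot|V(G)|^{O(1)}$; and since $m$ is a computable function of the clique-width parameter of~\cite{fgls09}, this is an \FPT\ algorithm with respect to that parameter, contradicting the \W1-hardness proved there. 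Thus all the real work is in producing the tree-models of Definition~\ref{def:tree-model}.

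First I would extract from \cite{fgls09} the explicit clique-width $k$-expression $\Gamma$ that constructs each hard instance $G$, where the number of labels $k$ is a computable function of the parameter. The structural feature that must be pulled out of their construction is that the nesting of disjoint-union operations in $\Gamma$ has height at most~$5$: the vertices fall into a constant number of hierarchically grouped layers (individual gadget vertices, gadgets, groups of gadgets, and the global assembly), and this is exactly what pins the depth at~$5$. I would then take the \emph{union tree} of $\Gamma$ --- obtained by keeping only the disjoint-union nodes as internal nodes and the \texttt{create} nodes as leaves, suppressing all unary relabel and add-edge nodes --- as the skeleton of the tree-model $T$. Its leaves are precisely $V(G)$, as required by condition~(1). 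To meet condition~(2) I would subdivide the edge above every leaf lying at depth below~$5$, pushing it down to depth exactly~$5$; since subdivisions are made only on the private branches below existing divergence points, the least common ancestor level of any two leaves is preserved, and all root-to-leaf paths then have length exactly~$5$.

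The labels come from the engine that makes this work: for each leaf $v$ I set its label to be the sequence of clique-width labels that $v$ carries at its union-node ancestors, recorded from the level of the least common ancestors upward to the root (padded to a fixed length). There are at most $k^5$ such label-trajectories, so we use $m\le k^5$ labels, again a computable function of the parameter. The heart of the argument is verifying condition~\ref{it:tree-model-edge}: that whether $uv\in E(G)$ is a function of the labels of $u,v$ and their distance in $T$. This rests on the standard semantics of clique-width: an edge $uv$ exists iff some add-edge operation lying on the path from the union node merging $u$ with $v$ up to the root fires on them, which happens iff at that operation $u$ and $v$ carry the matching pair of labels. The set of add-edge operations above a given level is global data fixed by $\Gamma$, the labels $u,v$ carry at each such level are exactly what the trajectories record, and the level of their least common ancestor is determined by their distance (because $T$ is levelled with all leaves at depth~$5$). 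Hence labels-plus-distance determine adjacency, establishing condition~\ref{it:tree-model-edge}.

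The main obstacle I foresee is this extraction step together with the homogeneity check: one must go through the constantly many gadget types of~\cite{fgls09}, confirm that the union-nesting genuinely collapses to height~$5$ (and not to some larger constant), and verify that no edge relation in their construction depends on information finer than the recorded label-trajectory and the least-common-ancestor level. I expect this to amount to a finite case analysis over their gadgets, after which conditions (1)--(4) of Definition~\ref{def:tree-model} hold by construction and the deduction of the first paragraph yields the \W1-hardness of \textsc{Chromatic number} on $\mathcal{T\!M}_m(5)$.
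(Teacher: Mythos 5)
There is nothing in the paper to compare you against here: Theorem~\ref{thm:coloringhard-tm} is imported verbatim from \cite{glo13} (which analyzes the construction of \cite{fgls09}) and the paper gives no proof of it. Judged on its own terms, your outline has the right shape, and it matches the mechanism used in the cited source: the ``consequently'' deduction in your first paragraph is correct (an \FPT algorithm on $\mathcal{T\!M}_m(5)$ parameterized by $m$, composed with the reduction of \cite{fgls09} and the computable bound on $m$, would contradict \W{1}-hardness parameterized by clique-width), and extracting a levelled tree-model from the union structure of a clique-width expression, with leaf labels recording label-trajectories along ancestors and adjacency read off from the add-edge operations above the least common ancestor, is indeed how membership in such classes is established.

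The gap is that the hard part is asserted, not proved, and the one concrete structural claim you do commit to is false as stated. (a) ``The nesting of disjoint-union operations in $\Gamma$ has height at most~$5$'' cannot hold literally: $k$-expressions use \emph{binary} unions, so assembling the unboundedly many gadget copies of \cite{fgls09} forces union-nesting depth growing with $|V(G)|$ in any $k$-expression. What one must actually prove is that long chains of binary unions can be flattened into single multi-ary union nodes because the interleaved relabel/add-edge operations treat the accumulated parts uniformly; this flattening \emph{is} the homogeneity property you defer to a ``finite case analysis over their gadgets'', so the crux of the theorem is left unexecuted rather than reduced to routine checking. (b) Condition~\ref{it:tree-model-edge} of Definition~\ref{def:tree-model} quantifies over all pairs in the whole tree: two leaves whose least common ancestor sits at level $i$ but under \emph{different} level-$i$ nodes must obey the same adjacency rule, so you need the unary-operation sequences above all same-level tree nodes to coincide, not merely that each leaf's trajectory records its label at each of its own union ancestors; moreover the trajectory must determine the vertex's label at every add-edge node strictly between consecutive union levels, which follows from determinism of relabelling only once that per-level homogeneity is in place. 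Your depth-normalization by subdividing pendant branches (preserving lca levels) is fine, but without carrying out (a) and (b) against the actual gadgets of \cite{fgls09}, the proposal is a plausible blueprint for the argument of \cite{glo13} rather than a proof.
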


\begin{proofof}{Theorem~\ref{thm:part-hardinstance}}
We use a reduction from the instance described in Theorem~\ref{thm:coloringhard-tm}.
Let $G\in\mathcal{T\!M}_m(5)$ and $r$ be an input of \textsc{Chromatic number},
i.e., the question is whether $G$ is $r$-colorable.
Let $T$ be a tree-model of $m$ labels and depth~$5$ 
(Definition~\ref{def:tree-model}) of~$G$.
We are going to construct a formula $\varphi$ and a graph $H$ of 
treedepth at most $5m+7$ such that, $V(G)\subseteq V(H)$ and
$X\subseteq V(G)$ is independent if and only if~$H\models\varphi(X)$.
Moreover, for $Y\subseteq V(H)$ such that
$Y\not\subseteq V(G)$, it must hold $H\models\varphi(Y)$
if and only if $Y=V(H)\setminus V(G)$.
Then, clearly, $(H,r+1)$ will be a \textsc{Yes} instance of
the \textsc{MSO-partitioning} problem given by $\varphi$ if, and only if,
$G$ is $r$-colorable.
This would be the desired reduction.

The rest of the proof is devoted to the construction of $\varphi$ and~$H$.
Let $M=\{1,\dots,m\}$ be the set of labels from
Definition~\ref{def:tree-model} and let $lab$ map $V(G)$, the set of leaves of
$T$, into~$M$. 
There exist graphs $L_1,\dots,L_5$ (self-loops allowed), each on the vertex set $M$,
such that the following holds for any $u,v\in V(G)$ by Definition~\ref{def:tree-model}:
$uv\in E(G)$ if and only if the least common ancestor of $u,v$ in
$T$ is at distance $i\leq 5$ from $u$ and $\{lab(u),lab(v)\}\in E(L_i)$.

A graph $H_1$ is constructed from $T$ as follows:
\begin{itemize}
\item All vertices and edges of $T$ are included in $H_1$
(the labels from $T$ are ignored),
the leaves of $T$ have no label in $H_1$ while all the non-leaf nodes get a new
label $\tau$ in $H_1$.
\item For every non-leaf node $x\in V(T)$ at distance $i\leq 5$ from the leaves of
$T$, a disjoint copy $L^x$ of the graph $L_i$ is created and added to $H_1$,
such that the vertices of $L^x$ receive the same (new) label~$\lambda$
and $x$ is made adjacent to all vertices of $L^x$.
\item Every leaf $z$ of $T$, for $i=1,\dots,5$, is connected 
by an edge in $H_1$ to the copy of the vertex $lab(z)$ in $L^x$, 
where $x$ is the ancestor of $z$ at distance $i$ from~$z$.
\end{itemize}

First of all, it is easy to see that $H_1$ is of treedepth at most $5m+6=1+5(m+1)$,
since $H_1$ is contained in the closure of a tree obtained from $T$ by
``splitting'' each non-leaf node $x$ to a path on $m+1$ vertices
forming the set $\{x\}\cup V(L^x)$ (which is of cardinality $m+1$).
Second, we observe that the graph $H_1$ encodes the edges of $G$ as follows;
(*)
for $u,v\in V(G)$, we find the least common ancestor $x$ of $u$ and $v$ among
the $\tau$-labeled vertices of $H_1$, and then we test whether there exist
$\lambda$-labeled neighbors $u',v'$ of $x$ (and so $u',v'\in V(L^x)$)
such that $uu',vv'\in E(H_1)$ and also~$u'v'\in E(H_1)$.

Assume for now that (*) the edges of $G$ are encoded in a
binary predicate $\gamma$, such that $uv\in E(G)$
$\iff$ $H_1\models\gamma(u,v)$.
With $\gamma$, we can easily define a desired formula $\varphi_1$
such that $H_1\models\varphi_1(X)$ if, and only if,
$X\subseteq V(G)$ is independent in~$G$ or $X=V(H_1)\setminus V(G)$.
It is
\begin{align*}
\varphi_1(X) \equiv&
	\left[\,
		\forall x\in X\big(\neg\tau(x)\wedge\neg\lambda(x)\big)
		\wedge \forall x,y\in X\big(x=y\vee \neg\gamma(x,y)\big)
	\,\right]
\\	&\vee
		\forall x\big((\tau(x)\vee\lambda(x))
		\longleftrightarrow x\in X\big)
.\end{align*}
Note that $\gamma$ does not depend on the original $m$ labels of~$T$.

The remaining two tasks are; to express $\gamma$ in FO over $H_1$,
and to ``get rid of'' possible self-loops and the labels $\tau,\lambda$ in~$H_1$
by transforming $\varphi_1$ over $H_1$ into equivalent $\varphi$ over simple
unlabeled $H$.
We finish these tasks as follows.

We recursively define $\alpha_0(x,y)\equiv(x=y)$ and, for $i=1,\dots,5$,
$\alpha_{i}(x,y)\equiv \tau(y)\wedge\exists z\big(
	\pedge(z,y)\wedge\alpha_{i-1}(x,z)\big)$.
The meaning of $\alpha_{i}(x,y)$ is that $y$ is an internal node of $T$ at
distance $i$ from $x$ (where $x$ will be a leaf of $T$ but this is not
enforced by $\alpha_i$).
The above encoding $(*)$ of the edges of $G$ into $H_1$ can now be
literally expressed as
\begin{align*}
\gamma(u,v) \equiv\>
	\exists x& \left[\, \tau(x) \wedge
		\left(\bigvee\nolimits_{i=1}^5
		\alpha_i(u,x)\wedge\alpha_i(v,x)\wedge
		\neg\exists x'\big(
			\alpha_{i-1}(u,x')\wedge\alpha_{i-1}(v,x')\big)\right)
	\right.
\\	&\wedge
	\exists u',v' \big(\lambda(u')\wedge\lambda(v')\wedge
		\pedge(x,u')\wedge\pedge(x,v')
\\	&\left.\vbox to 3ex{}\qquad~~~~\wedge
		\pedge(u,u')\wedge\pedge(v,v')\wedge\pedge(u',v')\big)
	\right]
,\end{align*}
which is an FO formula independent of $H_1$ and given~$T$.

Lastly, we observe that $H_1$ has no vertices of degree $1$.
We hence construct $H$ from $H_1$ by adding one new degree-$1$ neighbor to
every $\tau$-labeled vertex of $H_1$, adding two new degree-$1$ neighbors to  
every $\lambda$-labeled vertex of $H_1$ without self-loop,
adding three new degree-$1$ neighbors to 
every $\lambda$-labeled vertex of $H_1$ with self-loop,
and removing all the loops.
The resulting simple graph $H$ is of treedepth at most $5m+7$ (in fact, again
$\leq5m+6$), and one can identify the original vertices of $H_1$ as those
having degree $>1$ in~$H$.
The labels $\tau,\lambda$ and the self-loops of $H_1$ (as used in the formula
$\varphi_1$) can be routinely interpreted by FO formulas, e.g.,
$\tau(x)\equiv\neg\delta_1(x)\wedge\exists y
 \big(\delta_1(y)\wedge\pedge(x,y)\big)\wedge
 \forall y,y'\big[\big(\delta_1(y)\wedge\pedge(x,y)\wedge
  \delta_1(y')\wedge\pedge(x,y')\big)\to y=y'\big]$,
where $\delta_1(y)\equiv\forall z,z'\big[\big(
 \pedge(y,z)\wedge\pedge(y,z')\big)\to z=z'\big]$.
Such an interpretation defines desired $\varphi$ from~$\varphi_1$.
\end{proofof}
\end{accumulate}

\section{Conclusions and open problems}
\label{sec:conclusions}

We close with several open problems we consider
interesting and promising.

\smallskip
\textbf{Parameterizing by $r$. } It is interesting to consider
taking $r$ as a parameter. For example, Fluschnik et
al.~\cite{FluschnikKNS:2015} prove that the \textsc{Minimum Shared
  Edges} problem is \FPT parameterized by the number of paths. 
Omran et al.~\cite{OmranSZ:2013} prove that the \textsc{Minimum
  Vulnerability} problem is in \XP with the same parameter. Since both problems
are particular cases of the shifted problem, we ask whether the shifted
problem with $S$ being the set of $s-t$ paths of a (di)graph lies in \XP or is \NP-hard already for some constant $r$.

\smallskip
\textbf{Further uses of Lemma~\ref{lem:idp_separable}. }
For example, which interesting combinatorial sets $S$ can be represented
as $n$-fold integer programs~\cite{Onn:2010,DeLoeraHK:2013} such that the
corresponding polyhedra are decomposable?

\smallskip
\textbf{Approximation. } The \textsc{Minimum Vulnerability}
problem has also been studied from the perspective of
approximation algorithms~\cite{OmranSZ:2013}. What can be said about
the approximation of the shifted problem?

\smallskip
\textbf{Going beyond $0/1$. } The results in
Section~\ref{sec:explicit} are the only known ones in which $S$ does not have
to be $0/1$. What can be said about the shifted problem with such sets $S$
that are not given explicitly, e.g., when $S$ is given by a totally
unimodular system?

\bibliographystyle{plain}
\bibliography{shifted}

\ifaccumulating{%
\newpage\appendix\sloppy
\section{Appendix}
\let\epr=\endproofof
\accuprint
}

\end{document}